\documentclass[11pt,a4paper]{article}
\usepackage{jheppub}
\usepackage[utf8]{inputenc}
\usepackage[normalem]{ulem}
\usepackage[title]{appendix}
\makeatletter
\long\def\emph#1{\ifmmode\nfss@text{\em #1}\else\hmode@bgroup\text@command{#1}\em\check@icl #1\check@icr\expandafter\egroup\fi}
\makeatother
\usepackage{epsfig}
\usepackage{graphicx}
\usepackage{amsfonts,amsmath,amsthm,amssymb}
\usepackage{epiolmec}
\usepackage{phaistos}
\usepackage{adjustbox}
\newtheorem{theorem}{Theorem}
\newtheorem{corollary}{Corollary}
\newtheorem{definition}{Definition}
\newtheorem{notation}{Notation}
\newtheorem{remark}{Remark}
\newtheorem{proposition}[]{Proposition}

\numberwithin{equation}{section}
\newcommand{\tr}{\operatorname{tr}}
\usepackage{ulem}
\usepackage{multicol}
\usepackage{latexsym}
\usepackage{mathrsfs} 
\newcommand\Vtextvisiblespace[1][.5em]{%
  \mbox{\kern.06em\vrule height.3ex}%
  \vbox{\hrule width#1}%
  \hbox{\vrule height.3ex}}
\usepackage{cancel}

\usepackage{xparse}
\usepackage{float}
\usepackage{todonotes}
\usepackage{etoolbox}
\usepackage{tikz}
\usepackage{bbold}
\usepackage{upgreek}
\usetikzlibrary{arrows,positioning,cd,calc,math,decorations.pathreplacing,decorations.markings,decorations.pathmorphing,patterns
}
\tikzset{wave/.style={decorate, decoration=snake}}

\usepackage{xcolor}
\definecolor{MyBlue}{rgb}{0.25,0.5,0.75}
\colorlet{NextBlue}{MyBlue!20}
\colorlet{SecondBlue}{MyBlue!40}
\usepackage{subcaption}
\NewDocumentCommand{\tens}{t_}
 {%
  \IfBooleanTF{#1}
   {\tensop}
   {\otimes}%
 }
\NewDocumentCommand{\tensop}{m}
 {%
  \mathbin{\mathop{\otimes}\displaylimits_{#1}}%
 }

\newcommand{\eq}[1]{\begin{equation}\begin{gathered}#1\end{gathered}\end{equation}}
\newcommand{\eqs}[1]{\begin{equation}\begin{gathered}\begin{split}#1\end{split}\end{gathered}\end{equation}}

\DeclareMathOperator{\res}{Res}

\newcommand{\bs}{\boldsymbol}

\newcommand{\cP}{\mathcal{P}}
\newcommand{\cin}{\mathcal{C}_{in}}
\newcommand{\cout}{\mathcal{C}_{out}}
\newcommand{\cC}{\mathcal{C}}
\newcommand{\cH}{\mathcal{H}}

\newcommand{\Yt}{\Psi_{3pt}}

\newcommand{\T}{\mathcal{T}}

\newcommand{\diag}{\text{diag}}

\newcommand{\dd}{{\rm d}}

\newcommand{\cA}{\mathcal{A}}
\newcommand{\cM}{\mathcal{M}}
\usepackage{tikz}
\usepackage{pgfplots}
\usepackage{upgreek}
\usetikzlibrary{decorations.markings,pgfplots.fillbetween}
\definecolor{darkspringgreen}{rgb}{0.05, 0.5, 0.06}
\definecolor{MyBlue}{rgb}{0.25,0.25,0.75}
\definecolor{MyRed}{rgb}{0.75,0.25,0.25}
\colorlet{NextBlue}{MyBlue!20}
\colorlet{SecondBlue}{MyBlue!40}

\title{Monodromy dependendence and symplectic geometry of isomonodromic tau functions on the torus}
\author[a,b]{Fabrizio Del Monte,}
\author[c]{Harini Desiraju,}
\author[d,e,f]{Pavlo Gavrylenko}
\affiliation[a]{Centre de Recherches Math\'ematiques (CRM), Universit\'e de Montr\'eal C. P. 6128, succ. Centre Ville, Montr\'eal, Qu\'ebec, Canada H3C 3J7}
\affiliation[b]{Department of Mathematics and Statistics, Concordia University, 1455 de Maisonneuve Blvd. W. Montr\'eal, QC H3G 1M8 Canada}
\affiliation[c]{University of Sydney,\\
F07 Carslaw building}
\affiliation[d]{Section de Math\'ematiques, Universit\'e de Gen\`eve, 1205 Geneva, Switzerland}
\affiliation[e]{Max-Planck-Institut f\"ur Mathematik, 53111 Bonn, Germany}
\affiliation[f]{Bogolyubov Institute for Theoretical Physics, Metrologichna 14-b, 03143 Kyiv, Ukraine}
\emailAdd{delmonte@crm.umontreal.ca}
\emailAdd{harini.desiraju@sydney.edu.au}
\emailAdd{pasha.145@gmail.com}
\abstract{We compute the monodromy dependence of the isomonodromic tau function on a torus with $n$ Fuchsian singularities and $SL(N)$ residue matrices by using its explicit Fredholm determinant representation. We show that the exterior logarithmic derivative of the tau function defines a closed one-form on the space of monodromies and times, and identify it with the generating function of the monodromy symplectomorphism. As an illustrative example, we discuss the simplest case of the one-punctured torus in detail. Finally, we show that previous results obtained in the genus zero case can be recovered in a straightforward manner using the techniques presented here.}
\begin{document}

\maketitle

\section{Introduction}

Isomonodromic tau functions are defined as the generating functions of Poisson commuting Hamiltonians $H_i$ that generate flows in times $t_i\in\mathbb{T}$,
\begin{equation}\label{eq:TauDefIntro}
	\dd_t\log\T:=\sum_{\mathrm{times }\, t_i}\dd t_i\partial_{t_i}\log\T=\omega_{JMU}:=\sum_{\mathrm{times }\, t_i}H_i\dd t_i.
\end{equation}
Here $\dd_t$ is the exterior differential on the space of times $\mathbb{T}$, $\omega_{JMU}\in T^*\mathbb{T}$ is the so called Jimbo-Miwa-Ueno (JMU) one-form \cite{JMU1981}, and its closedness
\begin{align}
\dd_t\,\omega_{JMU}=0
\end{align}
is equivalent to the consistency of the Hamiltonians flows, describing deformations of a linear system of ODEs
\begin{equation}\label{eq:LinSysIntro}
    \partial_z\Phi(z)=\Phi(z)A(z)
\end{equation}
that preserve its monodromies. The Hamiltonians themselves are obtained from contour integrals of $\frac{1}{2}\tr A^2(z)$. Through the Riemann-Hilbert correspondence that maps the space $\mathcal{A}$ of coefficients of the linear system to the space of monodromies $\mathcal{M}$, the Hamiltonians can be written in terms of the times and the monodromy data.
The tau function $\T$ in \eqref{eq:TauDefIntro} is defined only up to an overall monodromy-dependent constant $C(M)$, and extending it to a closed one-form on $T^*(\mathcal{M} \times \mathbb{T})$ allows to determine the asymptotic behaviour of the tau function near its critical points, while revealing the symplectic properties of the tau function.

In \cite{Its2016,10.1093/imrn/rnv375,Lisovyy2016}, building upon an earlier work \cite{Bertola2010,Bertola2021correction}, a procedure to construct such an extended closed one-form was presented based on the Riemann-Hilbert approach to Painlev\'e equations.
The one-form was explicitly written for the cases of Painlev\'e VI, II, III\(_1\), and I, where it was used to obtain the ratios of the corresponding tau functions at the critical points, known as the connection constants, a longstanding problem in the theory of Painlev\'e equations. A Hamiltonian approach to this construction was put forward in \cite{Its2018} for all Painlev\'e equations and the Schlesinger system. The closed one-form for the Schlesinger system was written explicitly in \cite{Bertola2019}, with an elegant interpretation of the tau function as the generating function of the monodromy symplectomorphism, {\it i.e.} a homomorphism from a symplectic leaf in the space of coefficients $\mathcal{A}$ of the system to a symplectic leaf in the monodromy manifold $\mathcal{M}$.

In this paper, we determine the closed one-form $\dd\log\T$ for tau functions on a torus with regular singularities by using their Fredholm determinant representation \cite{DelMonte2020}, and show that they generate the monodromy symplectomorphism.

A key technique in our construction involves the Fredholm determinant representation, which is obtained using a pants decomposition of the torus (see theorem 1, \cite{DelMonte2020}). Specifically, the pants decomposition of the torus with $n$ punctures consists of $n$ spheres with 3 simple poles, which translates to describing the local behaviour of the solution to the linear system we call $L$ in terms of the solution to an appropriate linear problem defined on a sphere with three simple poles denoted by $L_{3pt}$. It then turns out that the kernel of the Fredholm determinant is completely described by $n$ such three-point solutions with suitable shifts that capture the topology of the torus. The tau function of the torus is then related to the Fredholm determinant by a proportionality factor. The pith of the monodromy dependence of the tau function therefore lies in understanding the derivative of the Fredholm determinant w.r.t the monodromy data. The resulting one-form has the structure 
\begin{align}
    \dd\log\T=\omega-\omega_{3pt}, && \dd=\dd_t+\dd_{\cal M},\label{SPLIT}
\end{align}
where $\omega$ depends on the data coming from the global properties of $L$, whereas $\omega_{3pt}$ depends on the local behaviour described by $L_{3pt}$. Moreover, $\omega$ depends on the monodromy data and the times, while $\omega_{3pt}$ depends only on the monodromy data. Such a structure of the one-form is instrumental in obtaining the connection constant, which will be the subject of an upcoming paper. We also observe that our approach does not rely on the information of the asymptotics or additional assumptions to obtain the closedness. 

This paper is structured as follows: we setup the linear system on an $n$-punctured torus, describe the pants decomposition, and briefly recap the construction of the Fredholm determinant in section \ref{sec:setup}, we compute the monodromy dependence of the Fredholm determinant in proposition \ref{prop:derFred} and obtain the closed one-form in theorem \ref{thm:TauGenFn}, highlighting the splitting described in \eqref{SPLIT}. In section \ref{subsec:1pt}, we use the example of the torus with one puncture to illustrate the role of the one-form $\dd \log\mathcal{T}$ as the generating function of the monodromy symplectomorphism in theorem \ref{thm:GenFnCM}. 

Throughout this paper we use the following notation
\begin{notation}\label{not:bold}
 Given an N-tuple of parameters $(\xi_1,\dots\xi_N)$, and a function $g(\xi_i)$, $i=1,\dots,N$ of these parameters, we define
 \begin{equation}
     g(\bs\xi):=\diag\left(g(\xi_1),\dots,g(\xi_N)\right).
 \end{equation}
 In particular, when $g(\xi_i)=\xi_i$, this is
 \begin{equation}
     \bs\xi=\diag\left(\xi_1,\dots,\xi_N \right).
 \end{equation} 
\end{notation}

\section*{Acknowledgements}
We especially thank Oleg Lisovyy, for the suggestion of computing the one-form $\dd\log\T$ by using the Fredholm determinant, and H.D thanks Andrei Prokhorov for numerous discussions. Final stages of the draft were completed during F.D.M and H.D's residence at the Isaac Newton Institute during the Fall 2022 semester and they thank the M\o ller institute and organizers of the program "Applicable resurgent asymptotics: towards a universal theory" for hospitality during the final stages of the draft, supported by EPSRC grant no EP/R014604/1. H.D. acknowledges the support of Australian Research Council Discovery Project \#DP200100210, and her INI visit is supported by the Simons foundation. Part of this work was done during H.D's residence at the Mathematical Sciences Research Institute in Berkeley, California, during the Fall 2021 semester supported by the National Science Foundation Grant No. DMS-1928930.
P.G. would like to thank Max Planck Institute for Mathematics in Bonn, where a part of this work was done, for the hospitality during the complicated times, and Geneva university for hosting him now.
The work of P.G. is partially supported by the Fonds National Suisse and by the NCCR “The Mathematics of Physics” (SwissMAP).
P.G. would also like to thank the defenders of Ukraine, who are saving his Motherland.

\section{Isomonodromic deformations on the torus and tau function}\label{sec:setup}
In this section, we setup the $SL(N)$ linear system on the $n$-point torus, introduce the pants decomposition with corresponding 3-point local solutions, and briefly describe the construction of the Fredholm determinant representation of the isomonodromic tau-function.

 \subsection{Setup}
Isomonodromic deformations on a torus with $n$ simple poles can be characterised by the following system of linear differential equations \cite{Takasaki2003,Levin2013}
\begin{align}
  \begin{array}{c}
     \frac{\partial}{\partial z}\Phi\left(z\right) =\Phi\left(z\right) L_{z}\left(z\right),\\ \\
     \quad (2\pi i) \frac{\partial}{\partial\tau} \Phi\left(z\right) =  \Phi\left(z\right) L_\tau\left(z\right), \qquad
     \frac{\partial}{\partial z_k} \Phi\left(z\right)=\Phi\left(z\right)L_{k}\left(z\right),
\end{array}
\label{lax_general}
\end{align}
with $\Phi(z)\in SL(N,\mathbb{C})$. Here $z$ is the coordinate on the $n$-point torus $C_{1, n}$ viewed as the identification space $z\sim z+r+\ell\tau$, $r, \ell\in\mathbb{N}$, with singularities at the points $z_{k}$ for $k= 1\dots n$, $\tau \in \mathbb{H}$ is the modular parameter of the torus.
The Lax matrices $L_z,L_{\tau}, L_k\in\mathfrak{sl}_N$ have elements
\begin{equation}\label{eq:nptL}
    (L_z)_{ij}(z)  =\delta_{ij}\left\{P_i+\sum_{k=1}^n\frac{\theta_1'(z-z_k)}{\theta_1(z-z_k)}(A_k)_{ii} \right\}
    -(1-\delta_{ij})\sum_{k=1}^n x(Q_j-Q_i, z-z_k)
    (A_k)_{ij},
\end{equation}
\eqs{\label{eq:nptMk}
    (L_k)_{ij}(z)=-\delta_{ij}\frac{\theta_1'(z-z_k)}{\theta_1(z-z_k)}(A_k)_{ii}+ (1-\delta_{ij}) x(Q_j-Q_i, z-z_k)
    (A_k)_{ij},
    }
\eq{\label{eq:nptMt}(L_\tau)_{ij}(z)=-\frac{1}{2}\delta_{ij}\sum_{k=1}^n\frac{\theta_1''(z-z_k)}{\theta_1(z-z_k)}(A_k)_{ii} -\sum_{k=1}^ny(Q_j-Q_i,z-z_k)(A_k)_{ij},
}
where $x, y$ are the Lam\'e functions defined as 
\begin{align}
    x(\xi, z) = \frac{\theta_{1}(z- \xi) \theta_{1}'(0)}{\theta_{1}(z) \theta_{1}(\xi)}, && y (\xi, z) = \partial_{\xi} x(\xi, z),
\end{align}
where $\theta_1$ is the Jacobi theta function 
\begin{equation}
    \theta_1(z):=\sum_{n\in\mathbb{Z}}(-1)^{n-\frac{1}{2}}e^{i\pi\tau\left(n+\frac{1}{2} \right)^2}e^{2\pi i\left(n+\frac{1}{2} \right)z}. \label{eq:Theta1def}
\end{equation}
The above linear system has two main properties.
\begin{itemize}
\item[1.] The matrices $A_{k}$ are diagonalizable
\begin{gather}\label{eq:AGmGinv}
    A_{k} = G_{k}^{-1} {\bs m_{k}} G_{k}
\end{gather}
and satisfy the constraint
\begin{equation}\label{cons:CONSTRAINT}
    \sum_{k=1}^n (A_k)_{ii}=\sum_{k=1}^n(G_k^{-1}\bs m_kG_k)_{ii} =0.
\end{equation}
Moreover, 
\begin{align}\label{cons:con2}
    m_{i}-m_{j}\notin \mathbb{Z}, && \sum_{j=1}^NP_j=\sum_{j=1}^NQ_j=0.
\end{align}
\item[2.] The matrices $L_z,\,L_k,\,L_{\tau}$ have the following transformation (using notation \ref{not:bold}) under the shift $z\mapsto z+\tau$
\begin{gather}
    L_z(z+\tau)=e^{-2\pi i\bs Q}L(z)e^{2\pi i\bs Q}, \nonumber \\
    L_k(z+\tau)=e^{-2\pi i\bs Q}L_k(z)e^{2\pi i\bs Q}+2\pi i \,\diag\left((A_k)_{11},\dots,(A_k)_{NN} \right), \label{eq:LMMTransform} \\
        L_\tau(z+\tau)=e^{-2\pi i\bs Q}\left(L_\tau(z)+L(z) \right)e^{2\pi\bs Q}-2\pi i\bs P \nonumber.
\end{gather}
\end{itemize}
The solution of the linear system \eqref{lax_general}, under the $z\mapsto z+ \tau$ shift therefore transforms as
\begin{equation}
   \Phi(z+\tau)=M_B\Phi(z)e^{2\pi i\bs Q},\label{eq:zdep_MB}
\end{equation}
where $M_{B} \in SL(N)$ is the B-cycle monodromy, and the monodromies around the punctures $z_k$ and the A-cycle monodromy are respectively
\begin{align}
    M_{k} = C_{k}e^{2\pi i {\bs m}_{k}} C_{k}^{-1}, && M_{A} =S_1 e^{2\pi i {\bs a}_{1}}S_1^{-1}, \label{eq:ntorus_mon}
\end{align}
with the constraint
\begin{gather}
    M_{B}^{-1} M_{A}^{-1} M_{B} M_{A}\prod_{k=1}^{n} M_{k} = \mathbb{1}.  \label{eq:n_moncons}
\end{gather}
\begin{definition}
\begin{enumerate}
    \item The Lax matrices $L_z(z)$ in equation \eqref{eq:nptL}  are described by the following space: {\small
\begin{equation}\label{eq:A1n}
    \mathcal{A}_{1,n}:=\bigg\{\tau,\,(G_k,\bs m_k,z_k)_{k=1}^n,\,( P_j, Q_j)_{j=1}^N:\tau\in\mathbb{H},\,z_k\in T^2_\tau,\,G_k\in SL(N),
    \,P_j,Q_j\in\mathbb{C},\,\ \eqref{cons:CONSTRAINT}, \eqref{cons:con2} \bigg\}/\sim,
\end{equation}}
where $\sim$ is the equivalence relation  $G_k\rightarrow G_k D$, where $\,D\in SL(N)$ is diagonal\footnote{The full gauge group acts as $G_k\rightarrow G_k H$, where $H\in SL(N)$ is an arbitrary matrix. However, the choice of diagonal $e^{2\pi i\bs Q}$ uniquely fixes $H$ to be diagonal}. The dimension of this space is
\begin{equation}
    \dim\mathcal{A}_{1,n}= n(N^2-1)+n(N-1) +n+1.
\end{equation}
$\mathcal{A}_{1,n}$ can be viewed as the symplectic reduction of the moduli space of flat $SL(
N,\mathbb{C})$ connections on the $n$-punctured torus \cite{Alekseev1995,hitchin1997frobenius,Korotkin1995,Levin1999,Krichever2001,Levin2013}).
\item The extended character variety of $SL(N)$ flat connections on $T_\tau^2\setminus\{z_1,\dots,z_n\}$(see \cite{Bertola2022} for genus zero case) is
 \begin{align}\label{eq:M1n}
    \mathcal{M}_{1,n} = \left\lbrace M_{A}, M_{B}, (C_k,\bs m_k)_{k=1}^n:\, M_A,M_B,C_k\in SL(N),\, \eqref{eq:ntorus_mon}, \eqref{eq:n_moncons}\right\rbrace/\sim ,
\end{align}
where $/\sim$ means that we identify monodromy representations related by an overall conjugation.
The dimension of the extended character variety is 
\begin{equation}
\dim\mathcal{M}_{1,n}= n(N^2-1)+n(N-1).
\end{equation}
For arbitrary genus, $\dim\mathcal{M}_{g,n}=(N^2-1)(2g-2+n)+n(N-1)$.
\end{enumerate}
\end{definition}

The usual (non-extended) character variety would be \begin{align}
    \mathcal{M}_{1,n}^{(0)} = \left\lbrace M_{A}, M_{B},M_1,\dots,M_n \in SL(N) \vert \eqref{eq:n_moncons}\right\rbrace/\sim, && && \dim\mathcal{M}_{1,n}^{(0)}=n(N^2-1),
\end{align}
and the standard space of coefficients
\begin{equation}\label{eq:A1n}
    \mathcal{A}_{1,n}^{(0)}:=\bigg\{\tau,\,(A_k,z_k)_{k=1}^n,\,( P_j, Q_j)_{j=1}^N:\tau\in\mathbb{H},\,z_k\in T^2_\tau,\,A_k\in SL(N),
    \,P_j,Q_j\in\mathbb{C},\,\ \eqref{cons:CONSTRAINT}, \eqref{cons:con2} \bigg\}/\sim,
\end{equation}
where now $A_k\sim  D^{-1}A_k D$, and the dimension 
\begin{equation}
    \dim\mathcal{A}_{1,n}^{(0)}=n(N^2-1)+n+1.
\end{equation}

By using gauge freedom, it is always possible to choose $S_1=\mathbb{1}$ in the A-cycle monodromy \eqref{eq:ntorus_mon}, i.e.
\begin{equation}\label{new_mon_a}
    M_A\equiv e^{2\pi i\bs a_1}.
\end{equation}
The local behaviour of the solution to the linear system in a tubular neighbourhood of the puncture $z_k$ is
\begin{equation}\label{eq:LocalTorus}
    \Phi \left( z \rightarrow z_{k} \right) = C_{k} \left(z- z_{k}\right)^{\bs m_k} \left( \mathbb{1} + \sum_{l=1}^{\infty} g_{k,l} (z-z_{k})^{l} \right) G_{k},
\end{equation}
where the matrices $C_k, G_k$ diagonalize the monodromies \eqref{eq:ntorus_mon} and the residue matrices \eqref{eq:AGmGinv} respectively. There is an ambiguity of the form
\begin{align}\label{eq:DK}
    C_k\mapsto C_kD_k^{-1}, && g_{k,l}\mapsto D_kg_{k,l}D_k^{-1}, && G_k\mapsto D_k G_k,
\end{align}
with $D_k$ diagonal, that does not change the asymptotics \eqref{eq:LocalTorus}, and it amounts to a change of normalization for the eigenvectors of $M_k,\, A_k$. The extended spaces \eqref{eq:M1n}, \eqref{eq:A1n} differ from the non-extended ones by the inclusion of the parameters that are changed by the transformation \eqref{eq:DK}. These parameters turn out to be canonically conjugated to \(\boldsymbol{m}_k\) \cite{Bertola2019}.

The matrices $g_{k,l}$ are computed recursively with the $i,j$ component given by
\begin{equation}\label{eq:g1_torus}
\begin{split}
    \left[G_{k}^{-1}\left(g_{k,1} + \left[ \bs m_{k}, g_{k,1} \right] \right) G_{k} \right]_{ij}& = \delta_{ij}\left\{P_i+\sum_{k'\neq k}^n\frac{\theta_1'(z_k-z_{k'})}{\theta_1(z_{k}-z_{k'})}(A_{k'})_{ii} \right\}\\
    & -(1-\delta_{ij})\sum_{k=1}^n x(Q_i-Q_j, z_k-z_{k'})
    (A_{k'})_{ij}.
\end{split}
\end{equation}

The isomonodromic time evolution\footnote{Here by isomonodromy we mean that the $C_k$'s are constant, as opposed to just the monodromies. This is the correct notion for the extended spaces $\cA_{1,n}$ and $\cM_{1,n}$.} arising from the compatibility of \eqref{lax_general} is generated by the $n + 1$ Poisson commuting Hamiltonians
\begin{align}\label{eq:IsomHams}
	H_{z_k}:=\res_{z=z_k}\frac{1}{2}\tr L(z)^2, && H_\tau:=\oint_Adz\frac{1}{2}\tr L(z)^2.
\end{align}
\begin{definition}
The isomonodromic tau function $\T_H$ is then defined as
\begin{align}\label{eq:TauDef}
	\partial_{z_k}\log\T_H=H_k, && 2\pi i\partial_\tau\log\T_H=H_\tau.
\end{align}	
\end{definition}

\subsection{Pants decomposition of the $n$-point torus and Hilbert spaces}\label{sec:pants}
The $n$-punctured torus can be decomposed into $n$ trinions, that we choose to be glued along copies of the A-cycle as in Figure \ref{fig:nptTorusPants}. There is a three-point problem associated to each trinion $\mathscr{T}^{[k]}$ 
\begin{gather}
    \partial_z \Phi_{3pt}^{[k]}(z)=\Phi_{3pt}^{[k]}(z) L_{3pt}^{[k]}(z), \nonumber\\ L_{3pt}^{[k]}(z)=-2\pi i A_{-}^{[k]}-2\pi i\frac{A_0^{[k]}}{1-e^{2\pi iz}},\label{eq:3ptk}
\end{gather}
with diagonalizable residue matrices
\begin{align}\label{eq:AGmGinv3pt}
    A_{-}^{[k]}=(G_{-}^{[k]})^{-1} \bs a_k G_{-}^{[k]}, && A_0^{[k]}=(G_0^{[k]})^{-1}\bs m_kG_0^{[k]},
\end{align} 
\begin{align}
        A_{+}^{[k]} = - A_{-}^{[k]}- A_{0}^{[k]} = (G_{+}^{[k]})^{-1} \bs a_{k+1} G_{+}^{[k]},
\end{align}
for $k =1, \dots n$.
The local solution on each trinion \(\Phi_{3pt}^{[k]}(z)\) is such that the ratio
\begin{equation*}
\Phi^{[k]}_{3pt}(z-z_k)^{-1}\Phi(z)
\end{equation*}
is regular and single-valued around \(z=z_k\), with $\Phi^{[k]}_{3pt}(z-z_{k})$ approximating the analytic behavior of \(\Phi(z)\) in the trinion $\mathscr{T}^{[k]}$.

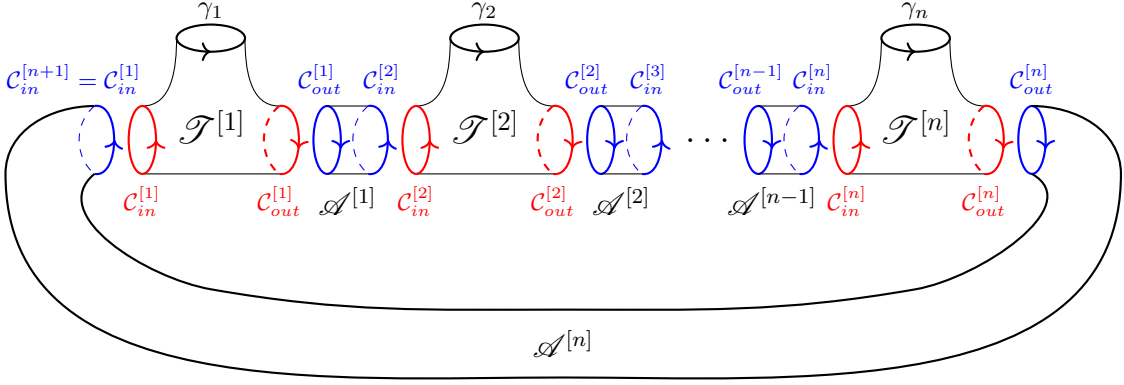
\begin{figure}[H]
\begin{center}
\begin{tikzpicture}[scale=0.9]

\draw[thick,decoration={markings, mark=at position 0.75 with {\arrow{>}}}, postaction={decorate}] (-5,1.5) circle[x radius=0.5, y radius=0.2];
\draw(-6,0.5) to[out=0,in=270] (-5.5,1.5);
\draw(-4,0.5) to[out=180,in=270] (-4.5,1.5);
\draw(-6,-0.5) to[out=0,in=180] (-4,-0.5);
\draw[thick, red,decoration={markings, mark=at position 0.6 with {\arrow{>}}}, postaction={decorate}]  (-4,0.5) to[out=10,in=-10] (-4,-0.5);
\draw[thick,dashed, red] (-4,0.5) to[out=220,in=-220] (-4,-0.5);
\draw[thick,red,decoration={markings, mark=at position 0 with {\arrow{>}}}, postaction={decorate}] (-6,0) circle[y radius=0.5, x radius=0.2];

\draw[thick,blue,decoration={markings, mark=at position 0 with {\arrow{<}}}, postaction={decorate}] (-3.3,0) circle[y radius=0.5, x radius =0.2];
\draw (-3.3,-0.5) to (-2.7,-0.5);
\draw (-3.3,0.5) to (-2.7,0.5);
\draw[thick, blue,decoration={markings, mark=at position 0.6 with {\arrow{<}}}, postaction={decorate}]  (-2.7,0.5) to[out=10,in=-10] (-2.7,-0.5);
\draw[dashed, blue] (-2.7,0.5) to[out=220,in=-220] (-2.7,-0.5);

\draw[thick,decoration={markings, mark=at position 0.75 with {\arrow{>}}}, postaction={decorate}] (-1,1.5) circle[x radius=0.5, y radius=0.2];
\draw(-2,0.5) to[out=0,in=270] (-1.5,1.5);
\draw(0,0.5) to[out=180,in=270] (-0.5,1.5);
\draw(-2,-0.5) to[out=0,in=180] (0,-0.5);
\draw[thick,red,decoration={markings, mark=at position 0 with {\arrow{>}}}, postaction={decorate}] (-2,0) circle[y radius=0.5, x radius=0.2];
\draw[thick, red,decoration={markings, mark=at position 0.6 with {\arrow{>}}}, postaction={decorate}]  (0,0.5) to[out=10,in=-10] (0,-0.5);
\draw[thick,dashed, red] (0,0.5) to[out=220,in=-220] (0,-0.5);

\draw[thick,blue,decoration={markings, mark=at position 0 with {\arrow{<}}}, postaction={decorate}] (0.7,0) circle[y radius=0.5, x radius =0.2];
\draw (0.7,-0.5) to (1.3,-0.5);
\draw (0.7,0.5) to (1.3,0.5);
\draw[thick, blue,decoration={markings, mark=at position 0.6 with {\arrow{<}}}, postaction={decorate}]  (1.3,0.5) to[out=10,in=-10] (1.3,-0.5);
\draw[dashed, blue] (1.3,0.5) to[out=220,in=-220] (1.3,-0.5);

\node at (2.3,0) {{\Large\dots}};

\draw[thick,blue,decoration={markings, mark=at position 0 with {\arrow{<}}}, postaction={decorate}] (3,0) circle[y radius=0.5, x radius =0.2];
\draw (3,-0.5) to (3.6,-0.5);
\draw (3,0.5) to (3.6,0.5);
\draw[thick, blue,decoration={markings, mark=at position 0.6 with {\arrow{<}}}, postaction={decorate}]  (3.6,0.5) to[out=10,in=-10] (3.6,-0.5);
\draw[dashed, blue] (3.6,0.5) to[out=220,in=-220] (3.6,-0.5);

\draw[thick,decoration={markings, mark=at position 0.75 with {\arrow{>}}}, postaction={decorate}] (5.3,1.5) circle[x radius=0.5, y radius=0.2];
\draw(4.3,0.5) to[out=0,in=270] (4.8,1.5);
\draw(6.3,0.5) to[out=180,in=270] (5.8,1.5);
\draw(4.3,-0.5) to[out=0,in=180] (6.3,-0.5);
\draw[thick,red,decoration={markings, mark=at position 0 with {\arrow{>}}}, postaction={decorate}] (4.3,0) circle[y radius=0.5, x radius=0.2];
\draw[thick, red,decoration={markings, mark=at position 0.6 with {\arrow{>}}}, postaction={decorate}]  (6.3,0.5) to[out=10,in=-10] (6.3,-0.5);
\draw[thick,dashed, red] (6.3,0.5) to[out=220,in=-220] (6.3,-0.5);

\draw[thick, blue,decoration={markings, mark=at position 0.6 with {\arrow{<}}}, postaction={decorate}]  (-6.7,0.5) to[out=10,in=-10] (-6.7,-0.5);
\draw[dashed, blue] (-6.7,0.5) to[out=220,in=-220] (-6.7,-0.5);

\draw[thick,blue,decoration={markings, mark=at position 0 with {\arrow{<}}}, postaction={decorate}] (7,0) circle[y radius=0.5, x radius =0.2];

\draw[thick] (-6.7,0.5) to[out=180,in=90] (-8,-0.5) to[out=270,in=180] (0.15,-3.5) to[out=0,in=270] (8.3,-0.5) to[out=90,in=0] (7,0.5);

\draw[thick] (-6.7,-0.5) to[out=220,in=170] (-5,-2.2) to[out=-10,in=180] (0.15,-2.5) to[out=0,in=190] (5.3,-2.2) to[out=10,in=-30] (7,-0.5);


\node at (-5,0.2) {{\Large$\mathscr{T}^{[1]}$}};
\node at (-1,0.2) {{\Large$\mathscr{T}^{[2]}$}};
\node at (5.3,0.2) {{\Large$\mathscr{T}^{[n]}$}};

\node at (-6,-0.9) {{\footnotesize${\color{red} {\mathcal{C}}_{in}^{[1]}}$}};
\node at (-5,1.9) {{\small$ \gamma_{1}$}};
\node at (-4,-0.9) {{\footnotesize${\color{red} {\mathcal{C}}_{out}^{[1]}}$}};

\node at (-3.4,0.9) {{\footnotesize${\color{blue} {\mathcal{C}}_{out}^{[1]}}$}};
\node at (-2.5,0.9) {{\footnotesize${\color{blue} {\mathcal{C}}_{in}^{[2]}}$}};

\node at (-2,-0.9) {{\footnotesize${\color{red} {\mathcal{C}}_{in}^{[2]}}$}};
\node at (-1,1.9) {{\small$ \gamma_{2}$}};
\node at (0,-0.9) {{\footnotesize${\color{red} {\mathcal{C}}_{out}^{[2]}}$}};

\node at (0.5,0.9) {{\footnotesize${\color{blue} {\mathcal{C}}_{out}^{[2]}}$}};
\node at (1.4,0.9) {{\footnotesize${\color{blue} {\mathcal{C}}_{in}^{[3]}}$}};

\node at (2.9,0.9) {{\footnotesize${\color{blue} {\mathcal{C}}_{out}^{[n-1]}}$}};
\node at (3.8,0.9) {{\footnotesize${\color{blue} {\mathcal{C}}_{in}^{[n]}}$}};

\node at (4.3,-0.9) {{\footnotesize${\color{red} {\mathcal{C}}_{in}^{[n]}}$}};
\node at (5.3,1.9) {{\small$ \gamma_{n}$}};
\node at (6.3,-0.9) {\footnotesize{${\color{red} {\mathcal{C}}_{out}^{[n]}}$}};

\node at (7,0.9) {{\footnotesize${\color{blue} {\mathcal{C}}_{out}^{[n]}}$}};
\node at (-7,0.9) {{\footnotesize${\color{blue} {\mathcal{C}}_{in}^{[n+1]}={\mathcal{C}}_{in}^{[1]} }$}};

\node at (-3,-0.9) {{\large$\mathscr{A}^{[1]}$}};
\node at (1,-0.9) {{\large$\mathscr{A}^{[2]}$}};
\node at (3.2,-0.9) {{\large$\mathscr{A}^{[n-1]}$}};
\node at (0.15,-3) {{\large$\mathscr{A}^{[n]}$}};

\end{tikzpicture}
\end{center}
\caption{Pants decomposition for the $n$-punctured torus}\label{fig:nptTorusPants}
\end{figure}

In terms of the contours defined above, the fundamental domain for the torus is the parallelogram in the figure \ref{fig:TorusParalgram_11}.
 \begin{figure}[H]
\centering
\begin{tikzpicture}[scale = 4]
\draw[ultra thick, blue,decoration={markings, mark=at position 0.5 with {\arrow{<}}}, postaction={decorate}] (0,0) to (1,0);
\draw[ultra thick, red,decoration={markings, mark=at position 0.5 with {\arrow{>}}}, postaction={decorate}] (0.72,0.5) to (1.72,0.5);
\draw[thick] (0,0) to (0.72,0.5);
\draw[thick] (1,0) to (1.72,0.5);

\node at (0.5,-0.1) {{\color{blue}$\cin^{[1]}$}};
\node at (1.22,0.6) {{\color{red}$\cout^{[n]}$}};

\node at (-0,-0.07) {$-\frac{\tau+1}{2}$};
\node at (1,-0.07) {$-\frac{\tau-1}{2}$};
\node at (0.72,0.6) {$\frac{\tau-1}{2}$};
\node at (1.72,0.6) {$\frac{\tau+1}{2}$};
\end{tikzpicture}
\caption{The fundamental domain of the torus}
\label{fig:TorusParalgram_11}
\end{figure}
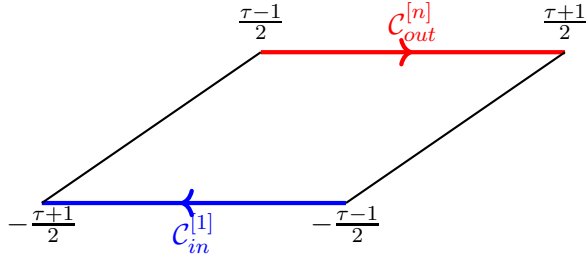

The parameters $\bs m_k$, $\bs a_k$ are local monodromy exponents, i.e the monodromies around the contours $\gamma_{k}$, $\cC_{in}^{[k]}$, $\cC_{out}^{[k-1]}$,  are respectively
{\begin{align}
    M_k=C_k e^{2\pi i\,\bs m_k}C_k^{-1}, && M_{\mathcal{C}_{in}^{[k]}}=S_k e^{2\pi i\, \bs a_k}S_k^{-1}, && M_{\mathcal{C}_{out}^{[k]} } = S_{k+1} e^{-2\pi i\, \bs a_{k+1}}S_{k+1}^{-1},
    \label{eq:1nmonodromy}
\end{align}}
for $k= 1,\dots n $, where the $C_{k}$, $S_{k}$ are constant matrices with $S_k$'s assuming the form
\begin{align}\label{id:S_1n}
    S_{1} = \mathbb{1}, && S_{n+1} = M_{B}^{-1}, && \bs a_{n+1} = \bs a_{1}.
\end{align}
Note that $\bs a_{1}$ corresponds to the usual $A$-cycle as described in \eqref{new_mon_a}. 
The pants decomposition induces a homomorphism of monodromy groups $  \pi_{1}\left(C_{0,n+2}\right)\rightarrow\pi_{1}\left(C_{1,n}\right)$ as can be seen from the the constraint \eqref{eq:n_moncons}
\begin{gather}
     M_{A}\prod_{k=1}^{n} M_{k} M_{B}^{-1} M_{A}^{-1} M_{B} = \mathbb{1} = M_{\mathcal{C}_{in}^{[1]}} \left(\prod_{k=1}^n M_{k} \right)  M_{\mathcal{C}_{out}^{[n]}}.
\end{gather}
The local behaviour of the three-point solution $\Phi^{[k]}$ around the point $z_k$ is then
\begin{equation}
    \Phi^{[k]}_{3pt} \left( z \rightarrow 0 \right) = C_{k} z^{\bs m_k} \left( \mathbb{1} + \sum_{l=1}^{\infty} g_{1,l}^{[k]} z^{l} \right) G_{0}^{[k]}, \label{eq:Local3pt}
\end{equation}
while its local behavior on the circles $\cC_{in}^{[k]}$ and $\cC_{out}^{[k]}$ is
\begin{equation}
\Phi_{3pt}^{[k]}\big\vert_{z\in \cC_{in}^{[k]}}=S_k e^{2\pi iz\bs a_k}\left(\mathbb{1}+\sum_{l=1}^{\infty}g_{-,l}^{[k]}e^{-2\pi i l z} \right)G_{-}^{[k]},\label{eq:3ptpinfty}
\end{equation}
\begin{equation}
\Phi_{3pt}^{[k]}\big\vert_{z\in \cC_{out}^{[k]}}=S_{k+1}e^{2\pi iz\bs a_{k+1}}\left(\mathbb{1}+\sum_{l=1}^{\infty}g_{+,l}^{[k]}e^{2\pi i l z} \right)G_{+}^{[k]}.\label{eq:3ptminfty}
\end{equation}
Once again note that $C_{k}, S_{k}, G_{0}^{[k]}, G_{-}^{[k]}, G_{+}^{[k]}$ are the diagonlization matrices in \eqref{eq:AGmGinv3pt}, \eqref{eq:1nmonodromy}. As before, the matrices $g_{1,l}$ can be computed recursively with the $1,1$ matrix entry given by
\begin{gather}\label{eq:g1_3pt}
    g_{1,1}^{[k]} + \left[ \bs m_{k}, g_{1,1}^{[k]} \right] = -2\pi i G_{0}^{[k]} A_{0}^{[k]} \left( G_{0}^{[k]} \right)^{-1}.
\end{gather}

We now associate Hilbert spaces to each of the boundary contours $\cC_{in}^{[k]}, \cC_{out}^{[k]}$ specified in Figure \ref{fig:nptTorusPants}. The projection operators on these spaces provide the building blocks to write the tau-function as a Fredholm determinant.
The total Hilbert space $\cH$ is decomposed into a direct sum of spaces $\cH^{[k]}$ corresponding to each pair of pants:
\eqs{\cH := \bigoplus_{k=1}^{n} \cH^{[k]} = \cH_{+} \oplus \cH_{-}, \label{eq:total_Hilbert}}
where
\eqs{\cH_{\pm}:=\bigoplus_{k=1}^{n} \left( \cH^{[k]}_{in, \mp} \oplus \cH^{[k]}_{out, \pm} \right).\label{eq:Hilbert_decomposition}}

\begin{definition}\label{def:YcalYnpoint}
We associate the single-valued matrix-valued functions $\Psi(z)$, $\Psi^{[k]}_{3pt}(z)$ defined on the boundary circles of the pants decomposition, to the functions $\Phi(z)$, $\Phi^{[k]}_{3pt}(z)$ in \eqref{lax_general}, \eqref{eq:3ptk} respectively\footnote{The $z-\tau$ in $\Psi$ is so that $\Psi_{out,n}$ is "identified" with $\Psi_{in,1}$, in the sense that the only difference is the twist}

\begin{align} \label{psi_body}
\Psi(z)\vert_{\cC^{[k]}_{out}}:=e^{-2\pi i(z-\delta_{k,n}\tau)\bs a_{k+1}}S_{k+1}^{-1}\Phi(z)\vert_{\cC^{[k]}_{out}}
, && \Psi(z)\vert_{\cC^{[k]}_{in}}:=e^{-2\pi iz\bs a_{k}}S_{k}^{-1}\Phi(z)\vert_{\cC^{[k]}_{in}}
, &&k=1,\dots,n
\end{align}
\begin{align} \label{3pt_body}
\Psi^{[k]}_{3pt}(z)\vert_{\cC^{[k]}_{out}}:=e^{-2\pi i{(z-\delta_{k,n}\tau)}\bs a_{k+1}}S_{k+1}^{-1}\Phi^{[k]}_{3pt}(z-z_k)\vert_{\cC^{[k]}_{out}}
, && \Psi^{[k]}_{3pt}(z)\vert_{\cC^{[k]}_{in}}:=e^{-2\pi i z\bs a_{k}}S_{k}^{-1}\Phi^{[k]}_{3pt}(z-z_k)\vert_{\cC^{[k]}_{in}}
,
\end{align}
where $\delta_{k,n}$ is the Kr\"onecker delta and the identities \eqref{id:S_1n} hold.
\end{definition}
\subsection{Fredholm determinant representation of the tau-function}\label{sec:tau}
We define the projection operators $\cP_{\Sigma}$, $\cP_{\oplus}$ on the Hilbert space defined in \eqref{eq:total_Hilbert} in terms of the solutions to the $n$-point linear sytem \eqref{psi_body}, and the solutions \eqref{3pt_body} to the three-point problems respectively\footnote{The Cauchy kernel are written in the cylindrical coordinates below, as they are more natural for our parametrization of the torus.}:
\begin{itemize}
    \item The operator $\cP_{\Sigma}$ is defined as
\eqs{\label{eq:PSigman}
    \left(\cP_{\Sigma}f\right)(z) := \oint_{\cC_{\Sigma}} \frac{\dd w}{2\pi i} \Psi(z) \Xi_{N}(z,w) \Psi (w){^{-1}} f(w), }
where 
\eqs{\cC_{\Sigma} := \bigcup_{k=1}^{n} \cout^{[k]} \cup \cin^{[k+1]}, && \cin^{[n+1]} := \cin^{[1]},}
and the (twisted) Cauchy kernel on the $n$-point torus
\begin{equation}\label{eq:Xi_N_exp}
\begin{split}
\Xi_{N}(z,w) & =\diag\left( \frac{\theta_1(z-w+Q_1-\rho)\theta_1'(0)}{\theta_1(z-w)\theta_1(Q_1-\rho)},\dots,\frac{\theta_1(z-w+Q_N-\rho)\theta_1'(0)}{\theta_1(z-w)\theta_1(Q_N-\rho)}\right)\\
 & = \frac{\theta_1(z-w+\bs Q-\rho)\theta_1'(0)}{\theta_1(z-w)\theta_1(\bs Q-\rho)}
,
\end{split}
\end{equation}
has the following transformation
\begin{align}\label{eq:TauShiftXiN}
    \Xi_N(z+\tau,w)=e^{2\pi i(\bs Q-\rho)}\, \Xi_N(z,w), && \Xi_N(x,w+\tau)=\Xi_N(z,w) e^{-2\pi i (\bs Q-\rho)}.
\end{align}

\item The operator $\cP^{[k]}$ is defined as
\begin{equation}\label{pk}
    \left(\cP_{\oplus}^{[k]} f^{[k]}\right)(z)  := \int_{\cin^{[k]} \cup\cout^{[k]}} \dd w \frac{\Yt^{[k]}(z) \Yt^{[k]} (w){^{-1}}}{1-e^{-2\pi i(z-w)}} f^{[k]}(w),
\end{equation}
and
\begin{align}
\cP_{\oplus} := \sum_{k=1}^{n} \cP_{\oplus}^{[k]}.
\end{align}
\end{itemize}
In the paper \cite{DelMonte2020} we proved that the tau function  \eqref{eq:TauDef} has the following Fredholm determinant representation:
    \begin{equation}
    \begin{split}\T_{H}(\tau) &= \det_{\cH_{+}}\left[\cP_{\Sigma, +}^{-1} \cP_{\oplus,+} \right] e^{i\pi\tau\tr\left(\bs a_1^2+\frac{\mathbb{1}}{6} \right)} e^{-i \pi N \rho} \prod_{i=0}^N\frac{\eta(\tau)}{\theta_1\left(Q_i-\rho \right)}\prod_{k=1}^n e^{-i\pi z_k\left(\tr\bs a_{k+1}^2-\tr\bs a_{k}^2\right)}, \end{split}\label{eq:Thm2}
    \end{equation}    
where $\cP_{\Sigma, +}:=\cP_{\Sigma}\big|_{\cH_+}, \cP_{\oplus,+}\big|_{\cH_+}$, $Q_{i} \equiv Q_{i}(\tau, z_{1},...,z_{n})$ are the dynamical variables of the $SL(N)$ linear system \eqref{eq:nptL}, $ \bs a_k$ are the monodromy exponents defined in \eqref{eq:1nmonodromy}, and $\rho$ is an arbitrary parameter. Furthermore, the time derivative of the Fredholm determinant recovers the JMU one form:
\begin{gather}\label{eq:tder}
    \omega_{JMU}:=\left(\dd_{\tau}+ \sum_{k=1}^{n} \dd_{z_{k}} \right) \log\mathcal{T}_{H} = \frac{1}{2\pi i}H_\tau \dd\tau+\sum_{k=1}^{n}H_k \dd z_k.
\end{gather}
We will do this by explicitly computing the derivative of the tau function \eqref{eq:Thm2} with respect to the monodromy data.
\section{Monodromy dependence of the torus tau function}\label{sec:MainThms}
We begin with the following identity for the derivative of the Fredholm determinant \cite{Gavrylenko2016b}
\begin{align} \label{eq:Tr}
    \dd \log \det_{\mathcal{H}_{+}} \left[ \cP_{\Sigma,+}^{-1} \cP_{\oplus,+} \right] = - \tr_{\cH} \cP_{\oplus} \dd \cP_{\Sigma}, && \dd = \dd_{\mathcal{M}} + \dd_{\tau} + \sum_{k=1}^n \dd_{z_k}, && \dd_{\bullet} = \dd\bullet \partial_{\bullet},
 \end{align}
 where $\dd_{\cal M}$ is the total exterior derivative on the extended character variety $\mathcal{M}_{1,n}$.
\begin{proposition}\label{prop:derFred}
The derivative of the Fredholm determinant w.r.t the monodromy data is 
\begin{equation}\label{eq:der_Fred_mon} 
\begin{split}
&\dd_{\mathcal{M}}\log \det_{\mathcal{H}_{+}} \left[ \cP_{\Sigma,+}^{-1} \cP_{\oplus,+} \right]  = \tr\left({\bs P}\dd_{\mathcal{M}}{\bs Q}\right)+\dd_{\mathcal{M}}\log\left(\prod_{i=1}^N\theta_1(Q_i-\rho) \right) +\sum_{k=1}^n\tr\bs m_k\dd_{\mathcal{M}} G_{k}G_{k}^{-1}\\
&+\sum_{k=1}^n \left(\tr\bs a_k\dd_{\mathcal{M}} G_{-}^{[k]}\left(G_{-}^{[k]}\right)^{-1} - \tr\bs m_k\dd_{\mathcal{M}} G_0^{[k]}\left(G_0^{[k]}\right)^{-1}-\tr\bs a_{k+1}\dd_{\mathcal{M}} G_{+}^{[k]}\left(G_{+}^{[k]}\right)^{-1}\right) \\
&-i\pi \tr \tau \dd_{\mathcal{M}} {\bs a}_{1}^2 + \sum_{k=1}^{n} i \pi \tr z_{k} \dd_{\mathcal{M}} \left({\bs a}_{k+1}^2 - {\bs a}_{k}^2 \right),
\end{split}
\end{equation}
where (with the notation \ref{not:bold}) ${\bs m}_{k}$, ${\bs a}_{k}$ are the local monodromy exponents \eqref{eq:1nmonodromy}, ${\bs P, \bs Q}$ are the dynamical variables in the Lax matrix \eqref{eq:nptL}, $\rho$ is an arbitrary parameter, $G_{k},\, G_{\pm}^{[k]}$ are the eigenvector matrices \eqref{eq:AGmGinv}, \eqref{eq:AGmGinv3pt}.
\end{proposition}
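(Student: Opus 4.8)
The plan is to start from the trace identity \eqref{eq:Tr}, which reduces the computation of $\dd_{\mathcal{M}}\log\det$ to evaluating $-\tr_{\cH}\cP_{\oplus}\dd_{\mathcal{M}}\cP_{\Sigma}$. First I would insert the explicit integral kernels \eqref{eq:PSigman} and \eqref{pk} for the two projectors and recall that the variation acts only through the ``building block'' functions $\Psi$ and $\Yt^{[k]}$ appearing in them (the Cauchy kernels $\Xi_N$ and $\tfrac{1}{1-e^{-2\pi i(z-w)}}$ are monodromy-independent once $\rho$ is held fixed, but note $\Xi_N$ still depends on $\bs Q$, which \emph{is} part of the data varied by $\dd_{\mathcal M}$ through the Riemann--Hilbert map). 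Writing $\dd_{\mathcal M}\cP_{\Sigma}$ in terms of $\Psi^{-1}\dd_{\mathcal M}\Psi$ and $\dd_{\mathcal M}\bs Q$, and using that $\cP_{\oplus}$ is a projector built from $\Yt^{[k]}$, the double contour integral collapses — as in the genus-zero computations of \cite{Gavrylenko2016b} — to a sum of single contour integrals of the form $\oint \tr\big[(\Yt^{[k]})^{-1}\dd_{\mathcal M}\Psi\cdots\big]$ localized on the circles $\cC_{in}^{[k]},\cC_{out}^{[k]}$, plus a term proportional to $\dd_{\mathcal M}\bs Q$ coming from the $\Xi_N$ dependence.

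Next I would evaluate each single contour integral by residues, using the explicit local expansions of $\Phi$, $\Phi^{[k]}_{3pt}$ and hence of $\Psi,\Yt^{[k]}$ near the punctures and on the gluing circles: \eqref{eq:LocalTorus}, \eqref{eq:g1_torus} near $z_k$, and \eqref{eq:Local3pt}, \eqref{eq:3ptpinfty}, \eqref{eq:3ptminfty}, \eqref{eq:g1_3pt} on $\cC^{[k]}_{in/out}$. The key mechanism is that $\Psi$ and $\Yt^{[k]}$ agree with each other up to a regular single-valued factor near $z_k$ and on each gluing circle (this is exactly the defining property of the pants decomposition, $\Phi^{[k]}_{3pt}(z-z_k)^{-1}\Phi(z)$ regular), so in the difference the leading singular pieces cancel and the residues pick out precisely the exponents $\bs m_k,\bs a_k,\bs a_{k+1}$ contracted against the logarithmic derivatives $\dd_{\mathcal M}G_k G_k^{-1}$, $\dd_{\mathcal M}G^{[k]}_0 (G^{[k]}_0)^{-1}$, $\dd_{\mathcal M}G^{[k]}_\pm (G^{[k]}_\pm)^{-1}$. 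The term $\tr(\bs P\,\dd_{\mathcal M}\bs Q)$ arises from combining the $\dd_{\mathcal M}\bs Q$ variation of $\Xi_N$ with the subleading coefficient $g_{k,1}$ via \eqref{eq:g1_torus} (whose diagonal part is built from $P_i$), while the $\dd_{\mathcal M}\log\prod_i\theta_1(Q_i-\rho)$ term comes from the $\bs Q$-dependent prefactor multiplying the Cauchy kernel in \eqref{eq:Xi_N_exp}. Finally, the explicit $\tau$- and $z_k$-dependent prefactors in the Fredholm representation \eqref{eq:Thm2} contribute the last line of \eqref{eq:der_Fred_mon} directly upon applying $\dd_{\mathcal M}$ (since $\bs a_1$ and $\bs a_k$ are monodromy data), so one must be careful to separate the determinant from those scalar factors when stating the result for $\det\big[\cP_{\Sigma,+}^{-1}\cP_{\oplus,+}\big]$ versus for $\T_H$.

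I expect the main obstacle to be the careful bookkeeping of which quantities are ``constant'' under $\dd_{\mathcal M}$ and which are not, together with the interchange of the two contour integrations and the sum over $k$: one must justify that the off-diagonal $x(Q_j-Q_i,\cdot)$ pieces of $\Xi_N$ and of the Lax matrix do not produce extra residues, that the normalization ambiguity \eqref{eq:DK} drops out (it must, since the left side is well-defined on the extended spaces), and that the gluing-circle contributions from adjacent trinions $\cC^{[k]}_{out}$ and $\cC^{[k+1]}_{in}$ combine with the correct signs into the telescoping structure visible in the second line of \eqref{eq:der_Fred_mon}. A secondary subtlety is handling the $\delta_{k,n}\tau$ twist in \eqref{psi_body}--\eqref{3pt_body}, which is what makes the $k=n$ ``out'' circle match the $k=1$ ``in'' circle and is responsible for the asymmetry between the $-i\pi\tr\tau\,\dd_{\mathcal M}\bs a_1^2$ term and the telescoping $z_k$-sum; tracking this twist correctly through the residue computation is the place where an error would most easily creep in.
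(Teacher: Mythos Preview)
Your outline has the right shape (trace identity, collapse to single contour integrals, local expansions), but two of the mechanisms you propose are not the ones that actually produce the terms, and at least one is genuinely wrong.

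\medskip
\textbf{The last line does not come from the prefactors.} You attribute $-i\pi\tau\,\tr\dd_{\mathcal M}\bs a_1^2+\sum_k i\pi z_k\,\tr\dd_{\mathcal M}(\bs a_{k+1}^2-\bs a_k^2)$ to the scalar prefactors in \eqref{eq:Thm2}. But the proposition is a statement about $\det[\cP_{\Sigma,+}^{-1}\cP_{\oplus,+}]$ alone, with no prefactors present. In the paper these terms are produced \emph{inside} the Fredholm determinant computation: they come from the explicit shifts $z\mapsto z-z_k$ and $z\mapsto z-\delta_{k,n}\tau$ in the definitions \eqref{psi_body}--\eqref{3pt_body}, which generate pieces $-2\pi i z_k\,\dd_{\mathcal M}\bs a_k$ and $2\pi i\tau\,\dd_{\mathcal M}\bs a_1$ when $\dd_{\mathcal M}$ hits the diagonal exponentials. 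The prefactors in \eqref{eq:Thm2} carry exactly the \emph{opposite} of these terms and serve to \emph{cancel} them in $\dd\log\T_H$ (Theorem~\ref{thm:TauGenFn}), not to supply them.

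\medskip
\textbf{The origin of $\tr(\bs P\,\dd_{\mathcal M}\bs Q)$ is different.} You say it arises from the $\dd_{\mathcal M}\bs Q$ variation of $\Xi_N$ combined with the subleading coefficient $g_{k,1}$ via \eqref{eq:g1_torus}. In the paper it does not: the $\Xi_N$ variation gives only the $\theta_1'/\theta_1$ piece (the term $I_3$). The $\tr(\bs P\,\dd_{\mathcal M}\bs Q)$ term appears through a global mechanism: in the integral $I_2=\sum_k\oint_{\cC_{in}^{[k]}\cup\cC_{out}^{[k]}}\tr\{\dd_{\mathcal M}\Psi\,\Psi^{-1}\partial_z\Psi\,\Psi^{-1}\}$ the interior circles telescope, leaving only $\cC_{in}^{[1]}\cup\cC_{out}^{[n]}$. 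One then uses the B-cycle quasi-periodicity $\Phi(z)\big|_{\cC_{out}^{[n]}}=M_B^{-1}\Phi(z)\big|_{\cC_{in}^{[1]}}e^{2\pi i\bs Q}$, and it is the $e^{2\pi i\bs Q}$ twist that injects $\dd_{\mathcal M}\bs Q$; contracting it against $\Phi^{-1}\partial_z\Phi=L_z$ and integrating over the A-cycle picks out the constant diagonal part $\bs P$ of \eqref{eq:nptL} (the $\theta_1'/\theta_1$ pieces vanish by \eqref{cons:CONSTRAINT}). No residue at $z_k$ and no $g_{k,1}$ is involved.

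\medskip
\textbf{The branch-cut obstruction.} For the remaining integral $I_1$ you propose to ``evaluate by residues''. This cannot be done directly: the combinations $-2\pi i z\,S_k\dd_{\mathcal M}\bs a_k S_k^{-1}+\dd_{\mathcal M}\Phi\,\Phi^{-1}$ are single-valued on each boundary circle but have a logarithmic branch cut at $z_k$, so the contour cannot be closed. The paper's device is to add and subtract companion integrals $\widetilde I_{in/out}^{[k]}$ in which $\Phi$ is replaced by $\Phi_{3pt}^{[k]}$. The \emph{differences} $I-\widetilde I$ are then single-valued and give the residues $\tr\bs m_k\dd_{\mathcal M}G_kG_k^{-1}-\tr\bs m_k\dd_{\mathcal M}G_0^{[k]}(G_0^{[k]})^{-1}$, while the $\widetilde I$ themselves are evaluated not by residues but by pushing the contours to $\pm i\infty$ using \eqref{eq:3ptpinfty}--\eqref{eq:3ptminfty}; this is where the $\tr\bs a_k\,\dd_{\mathcal M}G_{\pm}^{[k]}(G_{\pm}^{[k]})^{-1}$ terms and the $z_k,\tau$ terms of the last line simultaneously appear. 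Your sketch conflates these two distinct evaluations.
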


\begin{proof}
We start from the following equality (see eq. 3.61 in \cite{DelMonte2020})
\begin{align}
-\tr_{\cH}\left[\cP_{\oplus}\dd_{\mathcal{M}}\cP_{\Sigma} \right]& =-\sum_{k=1}^{n} \oint_{\cC_{in}^{[k]}\cup\cC_{out}^{[k]}}\dd w \oint_{\cC_{in}^{[k]}\cup\cC_{out}^{[k]}}\frac{\dd z}{2\pi i} \frac{1}{1-e^{-2\pi i(z-w)}}\tr\bigg\{\Yt^{[k]}(z)\Yt^{[k]}(w)^{-1} \nonumber \\
& \hspace{7cm}\times \dd_{\mathcal{M}} \left( \Psi(w)\Xi_N(w,z)\Psi(z)^{-1} \right) \bigg\} \nonumber \\
& =- \sum_{k=1}^{n}\oint_{\cC_{in}^{[k]}\cup\cC_{out}^{[k]}}\frac{\dd z}{2\pi i} \tr\left\{ d_{\mathcal{M}} \Psi(z) \Psi(z)^{-1}\left(\partial_{z}\Yt^{[k]}(z) \left(\Yt^{[k]}(z)\right)^{-1} \right) \right\} \nonumber\\
&+ \sum_{k=1}^{n}\oint_{\cC_{in}^{[k]}\cup\cC_{out}^{[k]}}\frac{\dd z}{2\pi i} \tr\left\{ \dd_{\mathcal{M}} \Psi(z) \Psi(z)^{-1}\left(\partial_{z}\Psi(z) \Psi(z)^{-1}\right) \right\} \nonumber\\
& +\sum_{k=1}^{n} \oint_{\cC_{in}^{[k]}\cup\cC_{out}^{[k]}}\frac{\dd z}{2\pi i} \tr\left\{ \left[\frac{\theta_1'(\bs Q-\rho)}{\theta_1(\bs Q-\rho)}-i\pi\mathbb{1}_N \right]\Psi(z)^{-1}\dd_{\mathcal{M}} \Psi(z) \right\}  \nonumber \\
&=: I_{1} + I_{2} + I_{3}.\label{eq:nTorus_trace}
\end{align}
We now compute each of the above integrals separately.
\begin{itemize}
    \item {\bf Computing the integral $I_{2}$:}
    
We begin by noting that the boundary circles in the pants decomposition (see figure \ref{fig:nptTorusPants}) can be identified in the following way
\begin{align}\label{same_contours}
    \cC_{out}^{[k]}=-\cC_{in}^{[k+1]} \qquad \textrm{for} \qquad l=1,\dots,n-1.
\end{align}
With the above identification of contours, the integral $I_{2}$ in the expression \eqref{eq:nTorus_trace} simplifies as
\begin{equation}
\begin{split}
I_{2} &=\sum_{k=1}^{n}\oint_{\cC_{in}^{[k]}\cup\cC_{out}^{[k]}}\frac{\dd z}{2\pi i} \tr\left\{ \dd_{\mathcal{M}} \Psi(z) \Psi(z)^{-1}\partial_{z}\Psi(z) \Psi(z)^{-1} \right\}\\
&=\sum_{k=1}^{n}\oint_{\cC_{in}^{[k]}}\frac{\dd z}{2\pi i} \tr\left\{ \dd_{\mathcal{M}} \Psi(z) \Psi(z)^{-1}\partial_{z}\Psi(z) \Psi(z)^{-1} \right\}\\
&+\sum_{k=1}^{n}\oint_{\cC_{out}^{[k]}}\frac{\dd z}{2\pi i} \tr\left\{ \dd_{\mathcal{M}} \Psi(z) \Psi(z)^{-1}\partial_{z}\Psi(z) \Psi(z)^{-1} \right\}\\
&=\oint_{\cC_{in}^{[1]}\cup\cC_{out}^{[n]}}\frac{\dd z}{2\pi i} \tr\left\{ \dd_{\mathcal{M}} \Psi(z) \Psi(z)^{-1}\partial_{z}\Psi(z) \Psi(z)^{-1} \right\}.\label{eq:MonDer0}
\end{split}
\end{equation}
In order to express the above expression in terms of the solution to the linear problem on the torus $\Phi$, we use the identities coming from \eqref{psi_body}, \eqref{3pt_body} and fix $z_1=0$ without loss of generality:
\begin{align} \label{Phi_1n}
    \Psi(z) \vert_{\cC_{in}^{[1]}} = e^{- 2\pi i z{\bs a_{1}}} \Phi(z)\vert_{\cC_{in}^{[1]}}, && \Psi(z) \vert_{\cC_{out}^{[n]}} = e^{-2\pi i (z-\tau) {\bs a_{1}}} M_{B} \Phi(z)\vert_{\cC_{out}^{[n]}},
\end{align}
where the contours (see Figure \ref{fig:TorusParalgram_11}): 
\begin{align}
    \cC_{in}^{[1]} =\left[\frac{1-\tau}{2},-\frac{(1+\tau)}{2} \right], &&   \cC_{out}^{[n]} =\left[\frac{1+\tau}{2},-\frac{(1-\tau)}{2} \right].
\end{align}
Therefore the solution $\Phi$ restricted to the outermost circles $\cin^{[1]}$, $\cout^{[n]}$ satisfies the relation:
\begin{align}\label{Phi_outnin1}
    \Phi(z)\vert_{\cout^{[n]}} = \Phi(z+\tau)\vert_{\cin^{[1]}} \mathop{=}^{\eqref{eq:zdep_MB}} M_{B}^{-1} \Phi(z)\vert_{\cin^{[1]}} e^{2\pi i {\bs Q}}.
\end{align}
Substituting \eqref{Phi_1n} and using the identity \eqref{Phi_outnin1}, the expression \eqref{eq:MonDer0} for the integral $I_2$ simplifies as follows:
\begin{align}
I_{2}& \mathop{=}^{\eqref{Phi_1n}} \oint_{\mathcal{C}_{in}^{[1]}}\frac{\dd z}{2\pi i} \tr\left\{  \left( -2\pi i z \dd_{\mathcal{M}} {\bs a_1}  +  \dd_{\mathcal{M}}\Phi(z) \Phi(z)^{-1}  \right)   \left(\partial_{z} \Phi(z) \Phi(z)^{-1}-2\pi i\bs a_1\right)\right\} \nonumber\\
&+\oint_{\mathcal{C}_{out}^{[n]}}\frac{\dd z}{2\pi i} \tr\bigg\{\left(  -2\pi i (z-\tau) M_{B}^{-1} \dd_{\mathcal{M}} {\bs a_{1}} M_{B} + M_{B}^{-1} \dd_{\mathcal{M}} M_{B}  + \dd_{\mathcal{M}} \Phi(z) \Phi(z)^{-1} \right) \nonumber \\
& \hspace{8cm}\times \left(\partial_{z}\Phi(z) \Phi(z)^{-1}-2\pi i M_{B}^{-1} {\bs a}_1 M_{B}\right)  \bigg\} \nonumber \\
& \mathop{=}^{\eqref{Phi_outnin1}} \oint_{\mathcal{C}_{in}^{[1]}}\frac{\dd z}{2\pi i} \tr\left\{  \left( -2\pi i z \dd_{\mathcal{M}} {\bs a_1}  +  \dd_{\mathcal{M}}\Phi(z) \Phi(z)^{-1}  \right)   \left(\partial_{z} \Phi(z) \Phi(z)^{-1}-2\pi i\bs a_1\right)\right\} \nonumber\\
& - \int_{\cC_{in}^{[1]}}\frac{\dd z}{2\pi i} \tr\bigg\{\left( -2\pi iz\dd_{\mathcal{M}}\bs a_1+\dd_{\mathcal{M}}\Phi(z)\Phi(z)^{-1}+2\pi i\Phi(z)\dd_{\mathcal{M}}\bs Q\Phi(z)^{-1}  \right) \nonumber \\
& \hspace{9.5cm}\times\left(\partial_{z} \Phi(z) \Phi(z)^{-1}-2\pi i\bs a_1\right)  \bigg\} \nonumber\\
& = - \oint_{\cC_{in}^{[1]}} \dd z \tr \left\{\dd_{\mathcal{M}}\bs Q\left(\Phi(z)^{-1} \partial_{z}\Phi(z)-2\pi i\Phi(z)^{-1}\bs a_1\Phi(z)\right)\right\} 
\nonumber \\
&\mathop{=}^{\eqref{eq:nptL}}    \tr\left(\bs P\dd_{\mathcal{M}}\bs Q\right)-2\pi i\oint_{\cC_{in}^{[1]}} \dd z \tr \left\{\dd_{\mathcal{M}}\bs Q\Phi(z)^{-1}\bs a_1\Phi(z)\right\}. \label{eq:I2_step1}
\end{align}
To obtain the last line, we used the constraint \eqref{cons:CONSTRAINT}, together with the following property of $\theta_1(z)$ 
\begin{gather}
    \theta_{1}(z+1) = -\theta_{1}(z) \qquad \Rightarrow \qquad \oint_{\cC_{in}^{[1]}} \dd z \, \partial_{z} \log\theta_{1}(z)  =i\pi
\end{gather}
to simplify the contribution from the Lax matrix $L(z)$.
In summary, 
\begin{equation}\label{eq:I_2}
    I_{2} =   \tr\left(\bs P\dd_{\mathcal{M}}\bs Q\right)+2\pi i\oint_{\cC_{in}^{[1]}} \dd z \tr \left\{d_{\mathcal{M}}\bs Q\Phi(z)^{-1}\bs a_1\Phi(z)\right\}.
\end{equation}
\item {\bf Computing the integral $I_{3}$:}

Using the identification of neighbouring contours \eqref{same_contours}, the integral $I_{3}$ in \eqref{eq:nTorus_trace} simplifies as
\begin{align}
 I_{3} &= \sum_{k=1}^{n} \oint_{\cC_{in}^{[k]}\cup\cC_{out}^{[k]}}\frac{\dd z}{2\pi i} \tr\left\{ \left[\frac{\theta_1'(\bs Q-\rho)}{\theta_1(\bs Q-\rho)}-i\pi\mathbb{1}_N \right]\Psi(z)^{-1}\dd_{\mathcal{M}} \Psi(z) \right\}\nonumber \\
 & \mathop{=}^{\eqref{same_contours}}\oint_{\cC_{in}^{[1]}\cup\cC_{out}^{[n]}}\frac{\dd z}{2\pi i} \tr\left\{ \left[\frac{\theta_1'(\bs Q-\rho)}{\theta_1(\bs Q-\rho)}-i\pi\mathbb{1}_N \right]\Psi(z)^{-1}\dd_{\mathcal{M}} \Psi(z) \right\} \nonumber \\
& \mathop{=}^{\eqref{Phi_1n}, \eqref{Phi_outnin1}}-\oint_{\cC_{in}^{[1]}}\dd z\tr\left\{ \left[\frac{\theta_1'(\bs Q-\rho)}{\theta_1(\bs Q-\rho)}-i\pi\mathbb{1}_N \right]\dd_{\mathcal{M}}\bs Q \right\}=\tr\left( \frac{\theta_1'(\bs Q-\rho)}{\theta_1(\bs Q-\rho)}\dd_{\mathcal{M}}\bs Q \right) \nonumber\\
& =\dd_{\mathcal{M}}\log\left(\prod_{i=1}^N\theta_1(Q_i-\rho) \right).\label{eq:I_3}
\end{align}
The last line is obtained by remembering that ${\bs Q}\in SL(N)$ and is therefore traceless.

\item {\bf Computing the integral $I_{1}$:} 

Let us start by expressing the integral $I_{1}$ \eqref{eq:nTorus_trace} in terms of the solutions to the linear problems using \eqref{psi_body}, \eqref{3pt_body}:
\begin{align}
&I_1= -\sum_{k=1}^{n}\oint_{\cC_{in}^{[k]}\cup\cC_{out}^{[k]}}\frac{\dd z}{2\pi i} \tr\left\{ \dd_{\mathcal{M}} \Psi(z) \Psi(z)^{-1}\partial_{z}\Yt^{[k]}(z) \Yt^{[k]}(z)^{-1}\right\} \nonumber\\
& \mathop{=}^{\eqref{psi_body}, \eqref{3pt_body}} -\oint_{\mathcal{C}_{in}^{[k]}}\frac{\dd z}{2\pi i} \tr\bigg\{ \left( -2\pi i z S_{k} \dd_{\mathcal{M}} {\bs a_{k}} S_{k}^{-1} - \dd_{\mathcal{M}} S_{k} S_{k}^{-1} +\dd_{\mathcal{M}} \Phi(z) \Phi(z)^{-1} \right)\nonumber \\
& \hspace{6.0cm} \times\left(\partial_{z} \Phi_{3pt}^{[k]}(z-z_k)  \Phi_{3pt}^{[k]}(z-z_k)^{-1}-2\pi i S_{k}{\bs a}_k S_{k}^{-1}\right) \bigg\} \nonumber \\
& -\oint_{\mathcal{C}_{out}^{[k]}}\frac{\dd z}{2\pi i} \tr\bigg\{ \left( -2\pi i \left( z - \delta_{k,n} \tau \right)S_{k+1} \dd_{\mathcal{M}} {\bs a_{k+1}} S_{k+1}^{-1} - \dd_{\mathcal{M}} S_{k+1} S_{k+1}^{-1} + \dd_{\mathcal{M}} \Phi(z) \Phi(z)^{-1} \right) \nonumber \\
& \hspace{5cm} \times \left(\partial_{z} \Phi_{3pt}^{[k]}(z-z_k)  \Phi_{3pt}^{[k]}(z-z_k) ^{-1}-2\pi iS_{k+1} {\bs a}_{k+1} S_{k+1}^{-1}\right) \bigg\} \nonumber\\
& =: \sum_{k=1}^{n}\left(I_{in}^{[k]}+I_{out}^{[k]}\right)\mathop{+}^{\eqref{id:S_1n},\eqref{same_contours}}\oint_{\cC_{in}^{[1]}}\dd z\tr\left\{\left( -2\pi i z  \dd_{\mathcal{M}} {\bs a_{1}} + \dd_{\mathcal{M}} \Phi(z) \Phi(z)^{-1} \right)\bs a_1 \right\}\\
& +\oint_{\cC_{out}^{[n]}}\dd z\tr\left\{\left(- 2\pi i (z-\tau) \dd_{\mathcal{M}} {\bs a_{k}}  + \dd_{\mathcal{M}} M_B M_B^{-1} + M_B \dd_{\mathcal{M}} \Phi(z) \Phi(z)^{-1} M_B^{-1}\right)\bs a_1 \right\}\\
& =\sum_{k=1}^{n}\left(I_{in}^{[k]}+I_{out}^{[k]}\right)-2\pi i\oint_{\cC_{in}^{[1]}} \dd z \tr \left\{\dd_{\mathcal{M}}\bs Q\Phi(z)^{-1}\bs a_1\Phi(z)\right\}, \label{eq:I_1_collect}
\end{align}
where we defined
\begin{align}
I_{in}^{[k]}&:=-\oint_{\mathcal{C}_{in}^{[k]}}\frac{\dd z}{2\pi i} \tr\bigg\{ \left( -2\pi i z S_{k} \dd_{\mathcal{M}} {\bs a_{k}} S_{k}^{-1} - \dd_{\mathcal{M}} S_{k} S_{k}^{-1} + \dd_{\mathcal{M}} \Phi(z) \Phi(z)^{-1} \right)\nonumber \\
& \hspace{7cm} \times\partial_{z} \Phi_{3pt}^{[k]}(z-z_k)  \Phi_{3pt}^{[k]}(z-z_k)^{-1}\bigg\} \label{def:I_ink}
\end{align}
\begin{align}
I_{out}^{[k]}&:=-\oint_{\mathcal{C}_{out}^{[k]}}\frac{\dd z}{2\pi i} \tr\bigg\{ \left( -2\pi i (z-\delta_{k,n} \tau) S_{k+1} \dd_{\mathcal{M}} {\bs a_{k+1}} S_{k+1}^{-1} - \dd_{\mathcal{M}} S_{k+1} S_{k+1}^{-1} + \dd_{\mathcal{M}} \Phi(z) \Phi(z)^{-1} \right)\nonumber \\
&\hspace{7cm} \times \partial_{z} \Phi_{3pt}^{[k]}(z-z_k)  \Phi_{3pt}^{[k]}(z-z_k)^{-1}\bigg\}.\label{def:I_outk}
\end{align}
Note that the functions 
\begin{gather}
    \left(-2\pi i z S_{k} \dd_{\mathcal{M}} {\bs a_{k}} S_{k}^{-1} - \dd_{\mathcal{M}} S_{k} S_{k}^{-1} + \dd_{\mathcal{M}} \Phi(z) \Phi(z)^{-1} \right), \nonumber \\
    \left(- 2\pi i \left( z - \delta_{k,n} \tau \right)S_{k+1} \dd_{\mathcal{M}} {\bs a_{k+1}} S_{k+1}^{-1} - \dd_{\mathcal{M}} S_{k+1} S_{k+1}^{-1} + \dd_{\mathcal{M}} \Phi(z) \Phi(z)^{-1} \right),
\end{gather}
in the integrands above are single-valued on $\cC_{in}^{[k]} $ and $\cC_{out}^{[k]} $ respectively, but they have logarithmic branch cuts as can be seen from the local solutions \eqref{eq:LocalTorus}, \eqref{eq:Local3pt}-\eqref{eq:3ptminfty}, that make it impossible to close the integration contour.

So, we introduce the following trick: we add and subtract the integrals $\widetilde{I}_{in}^{[k]}, \widetilde{I}_{out}^{[k]}$, defined to be analogous to $I_{in}^{[k]}$, $I_{out}^{[k]}$ with the solution of the torus linear system $\Phi(z)$ in \eqref{def:I_ink}, \eqref{def:I_outk} replaced by the solution of the 3-pt linear system $\Phi_{3pt}^{[k]}$, namely
\begin{align}
\widetilde{I}_{in}^{[k]}&:=-\oint_{\cC_{in}^{[k]}}\frac{\dd z}{2\pi i}\tr\bigg\{\left(-2\pi iz S_{k}\dd_{\mathcal{M}}\bs a_{k}S_{k}^{-1}-\dd_{\mathcal{M}} S_{k}S_{k}^{-1}+\dd_{\mathcal{M}} \Phi_{3pt}^{[k]}(z-z_k)\Phi_{3pt}^{[k]}(z-z_k)^{-1}  \right)\nonumber \\
& \hspace{8.5cm}\times \partial_z\Phi_{3pt}^{[k]}(z-z_k)\Phi_{3pt}^{[k]}(z-z_k)^{-1} \bigg\},
\end{align}
\begin{align}
\widetilde{I}_{out}^{[k]}&:=-\oint_{\cC_{out}^{[k]}}\frac{\dd z}{2\pi i}\tr\bigg\{\bigg(-2\pi i\left( z - \delta_{k,n} \tau \right) S_{k+1}\dd_{\mathcal{M}}\bs a_{k+1}S_{k+1}^{-1}-\dd_{\mathcal{M}} S_{k+1}S_{k+1}^{-1} \nonumber \\
& \hspace{4.5cm}+\dd_{\mathcal{M}} \Phi_{3pt}^{[k]}(z-z_k)\Phi_{3pt}^{[k]}(z-z_k)^{-1}  \bigg) \times \partial_z\Phi_{3pt}^{[k]}(z-z_k)\Phi_{3pt}^{[k]}(z-z_k)^{-1} \bigg\}.
\end{align}
We now compute the differences
\begin{align}
I_{in}^{[k]}-\widetilde{I}_{in}^{[k]}&=-\oint_{\cC_{in}^{[k]}}\frac{\dd z}{2\pi i}\tr\bigg\{\left(\dd_{\mathcal{M}}\Phi(z)\Phi(z)^{-1}-\dd_{\mathcal{M}}\Phi_{3pt}^{[k]}(z-z_k)\Phi_{3pt}^{[k]}(z-z_k)^{-1}\right)\nonumber \\
& \hspace{6cm} \times \partial_z\Phi_{3pt}^{[k]}(z-z_k)\Phi_{3pt}^{[k]}(z-z_k)^{-1} \bigg\},
\end{align}
\begin{align}
I_{out}^{[k]}-\widetilde{I}_{out}^{[k]}&=-\oint_{\cC_{out}^{[k]}}\frac{\dd z}{2\pi i}\tr\bigg\{\left(\dd_{\mathcal{M}}\Phi(z)\Phi(z)^{-1}-\dd_{\mathcal{M}}\Phi_{3pt}^{[k]}(z-z_k)\Phi_{3pt}^{[k]}(z-z_k)^{-1}\right)\nonumber\\
& \hspace{6cm} \times \partial_z\Phi_{3pt}^{[k]}(z-z_k)\Phi_{3pt}^{[k]}(z-z_k)^{-1} \bigg\},
\end{align}
and sum the above expressions to obtain the following expression
\begin{align}
& I_{in}^{[k]}-\widetilde{I}_{in}^{[k]}+I_{out}^{[k]}-\widetilde{I}_{out}^{[k]}=-\oint_{\cC_{in}^{[k]}\cup\cC_{out}^{[k]}}\frac{\dd z}{2\pi i}\tr\bigg\{\left(\dd_{\mathcal{M}}\Phi(z)\Phi(z)^{-1}-\dd_{\mathcal{M}}\Phi_{3pt}^{[k]}(z-z_k)\Phi_{3pt}^{[k]}(z-z_k)^{-1}\right)\nonumber \\
& \hspace{10cm} \times\partial_z\Phi_{3pt}^{[k]}(z-z_k)\Phi_{3pt}^{[k]}(z-z_k)^{-1} \bigg\} \nonumber\\
& =\res_{z=z_k}\tr\left\{\left(\dd_{\mathcal{M}}\Phi(z)\Phi(z)^{-1}-\dd_{\mathcal{M}}\Phi_{3pt}^{[k]}(z)\Phi_{3pt}^{[k]}(z-z_k)^{-1}\right)\partial_z\Phi_{3pt}^{[k]}(z)\Phi_{3pt}^{[k]}(z-z_k)^{-1} \right\}. \label{I_1_step1}
\end{align}
Note the orientation of the contours of integration (see figure \ref{fig:TorusParalgram_11}) when taking the residue.
Substituting the local behaviour near $z=z_k$ of the functions $\Phi$, $\Phi_{3pt}^{[k]}$ described in \eqref{eq:LocalTorus}, \eqref{eq:Local3pt} respectively, we compute the individual terms in the residue:
\begin{align}
\dd_{\mathcal{M}}\Phi\Phi^{-1}&=\dd_{\mathcal{M}} C_{k}C_{k}^{-1}+\log(z-z_k)C_k \dd_{\mathcal{M}}\bs m_k C_k^{-1}
\nonumber \\
&+ C_k (z-z_k)^{\bs m_k}\dd_{\mathcal{M}} G_k G_k^{-1}(z-z_k)^{-\bs m_k}C_k^{-1} +\mathcal{O}((z-z_k)), \label{eq:dMPhi}\\
\dd_{\mathcal{M}}\Phi_{3pt}^{[k]} (\Phi_{3pt}^{[k]}) ^{-1}&=\dd_{\mathcal{M}} C_kC_k^{-1}+\log(z-z_k)C_k \dd_{\mathcal{M}}\bs m_k C_k^{-1}
\nonumber\\
&+C_k (z-z_k)^{\bs m_k}\dd_{\mathcal{M}} G_0^{[k]}\left(G_0^{[k]}\right)^{-1}(z-z_k)^{\bs m_k}C_k^{-1}+\mathcal{O}((z-z_k)),\label{eq:dMPhi3pt} \\
\partial_z\Phi_{3pt}^{[k]}\left(\Phi_{3pt}^{[k]} \right)^{-1}&=\frac{C_k\bs m_k C_k^{-1}}{z-z_k}+\mathcal{O}(1). \label{eq:dzPhi3pt}
\end{align}
Substituting \eqref{eq:dMPhi}-\eqref{eq:dzPhi3pt} in \eqref{I_1_step1}
\begin{equation} \label{diff_I1}
\begin{split}
& I_{in}^{[k]}-\widetilde{I}_{in}^{[k]}+I_{out}^{[k]}-\widetilde{I}_{out}^{[k]}\\
&=\res_{z=z_k}\tr\left\{\left(\dd_{\mathcal{M}}\Phi(z)\Phi(z)^{-1}-\dd_{\mathcal{M}}\Phi_{3pt}^{[k]}(z)\Phi_{3pt}^{[k]}(z)^{-1}\right)\partial_z\Phi_{3pt}^{[k]}(z)\Phi_{3pt}^{[k]}(z)^{-1} \right\} \\
&=\tr\bs m_k \dd_{\mathcal{M}} G_{k}G_{k}^{-1}-\tr\bs m_k \dd_{\mathcal{M}} G_0^{[k]}\left(G_0^{[k]}\right)^{-1}. 
\end{split}
\end{equation}
The last step to compute the integral $I_{1}$ comes from noting that the integrals $\widetilde{I}_{in,out}^{[l]}$ themselves can be evaluated explicitly using the local behavior \eqref{eq:3ptpinfty}, \eqref{eq:3ptminfty} of the 3-point solution at their 'local' $\pm i\infty$
\begin{align}\label{Itin}
\widetilde{I}_{in}^{[k]} 
&=-\oint_{\cC_{in}^{[k]}}\frac{\dd z}{2\pi i}\tr \bigg\{\left(-2\pi iz S_{k}\dd_{\mathcal{M}}\bs a_{k}S_{k}^{-1}-\dd_{\mathcal{M}} S_{k}S_{k}^{-1}+\dd_{\mathcal{M}} \Phi_{3pt}^{[k]}(z-z_k)\Phi_{3pt}^{[k]}(z-z_k)^{-1}  \right)\nonumber\\
& \hspace{9.5cm} \times\partial_z\Phi_{3pt}^{[k]}(z-z_k)\Phi_{3pt}^{[k]}(z-z_k)^{-1} \bigg\}.
\end{align}
We begin by computing the following expressions
\begin{align}
 &\dd_{\mathcal{M}} \Phi_{3pt}^{[k]}(z-z_k)\Phi_{3pt}^{[k]}(z-z_k)^{-1}\vert_{z\rightarrow- i \infty}\nonumber \\
 & \mathop{=}^{\eqref{eq:3ptpinfty}} \dd_{\mathcal{M}} S_k S_{k}^{-1} +  2\pi i (z -z_{k})S_{k}\dd_{\mathcal{M}}\bs a_{k} S_{k}^{-1} + S_{k} e^{2\pi i (z -z_{k})\bs a_{k}}\dd_{\mathcal{M}}G_{-}^{[k]} \left(G_{-}^{[k]}\right)^{-1}  e^{-2\pi i (z  -z_{k})\bs a_{k}} S_k^{-1},\label{eq:3ptM_minusiinf}
\end{align}
and similarly, the $z$-derivative term 
\begin{align}
\dd_{z} \Phi_{3pt}^{[k]}(z-z_k)\Phi_{3pt}^{[k]}(z-z_k)^{-1}\vert_{z\rightarrow -i \infty}
 &= 2\pi i S_{k} {\bs a}_{k} S_{k}^{-1}. \label{eq:3ptz_minusiinf}
 \end{align}
 Substituting \eqref{eq:3ptM_minusiinf} and \eqref{eq:3ptz_minusiinf} in the integrand of \eqref{Itin}, 
\begin{align}
&\lim_{z\rightarrow-i\infty}\tr \bigg\{\left(-2\pi iz S_{k}\dd_{\mathcal{M}}\bs a_{k}S_{k}^{-1}-\dd_{\mathcal{M}} S_{k}S_{k}^{-1}+\dd_{\mathcal{M}} \Phi_{3pt}^{[k]}(z-z_k)\Phi_{3pt}^{[k]}(z-z_k)^{-1}  \right)\nonumber\\
& \hspace{9.5cm} \times\partial_z\Phi_{3pt}^{[k]}(z-z_k)\Phi_{3pt}^{[k]}(z-z_k)^{-1} \bigg\}  \nonumber\\
&=\tr \bigg\{\left( -z_{k} 2\pi i S_{k}\dd_{\mathcal{M}}\bs a_{k} S_{k}^{-1} + S_{k} e^{2\pi i (z -z_{k})\bs a_{k}}\dd_{\mathcal{M}}G_{-}^{[k]} \left(G_{-}^{[k]}\right)^{-1}  e^{-2\pi i (z  -z_{k})\bs a_{k}} S_k^{-1} \right)\nonumber\\
& \hspace{9.5cm} \times2\pi i S_{k} {\bs a}_{k} S_{k}^{-1} \bigg\} \nonumber \\
&= (2\pi i )\tr\left(-2\pi i z_{k} {\bs a}_{k} \dd_{\mathcal{M}} {\bs a}_{k} +{\bs a}_{k} \dd_{\mathcal{M}}G_{-}^{[k]} \left(G_{-}^{[k]}\right)^{-1}\right).  \label{eq:3ptz_minusiinf1}
 \end{align}
 Substituting \eqref{eq:3ptM_minusiinf}, \eqref{eq:3ptz_minusiinf}, \eqref{eq:3ptz_minusiinf1} in \eqref{Itin} we get
 \begin{align}\label{eq:Itin1}
    \widetilde{I}_{in}^{[k]} 
&=-\oint_{\cC_{in}^{[k]}}\frac{dz}{2\pi i} (2\pi i )\tr\left(-2\pi i z_{k} {\bs a}_{k} \dd_{\mathcal{M}} {\bs a}_{k} +{\bs a}_{k} \dd_{\mathcal{M}}G_{-}^{[k]} \left(G_{-}^{[k]}\right)^{-1}\right) \nonumber \\
&=\tr\left(-2\pi i z_{k} {\bs a}_{k} \dd_{\mathcal{M}} {\bs a}_{k} +{\bs a}_{k} \dd_{\mathcal{M}}G_{-}^{[k]} \left(G_{-}^{[k]}\right)^{-1}\right).
\end{align}
The term $\widetilde{I}_{out}^{[k]}$  is computed in a similar fashion
\begin{align}\label{Itout}
\widetilde{I}_{out}^{[k]}&:=-\oint_{\cC_{out}^{[k]}}\frac{\dd z}{2\pi i}\tr\bigg\{\bigg(-2\pi i\left( z - \delta_{k,n} \tau \right) S_{k+1}\dd_{\mathcal{M}}\bs a_{k+1}S_{k+1}^{-1}-\dd_{\mathcal{M}} S_{k+1}S_{k+1}^{-1} \nonumber \\
& \hspace{4.5cm}+\dd_{\mathcal{M}} \Phi_{3pt}^{[k]}(z-z_k)\Phi_{3pt}^{[k]}(z-z_k)^{-1}  \bigg) \times \partial_z\Phi_{3pt}^{[k]}(z-z_k)\Phi_{3pt}^{[k]}(z-z_k)^{-1} \bigg\}.
\end{align}
Similar to the above computation, start by computing the individual 3-pt derivative terms
\begin{align}
    \dd_{\mathcal{M}} \Phi_{3pt}^{[k]}(z)\Phi_{3pt}^{[k]}(z)^{-1}\vert_{z\rightarrow +i \infty} 
     \mathop{=}^{\eqref{eq:3ptminfty}}& \dd_{\mathcal{M}} S_{k+1} S_{k+1}^{-1} + 2\pi i S_{k+1} \left( z -z_{k}  \right) \dd_{\mathcal{M}}\bs a_{k+1} S_{k+1}^{-1} \nonumber\\
    &+  S_{k+1}e^{2\pi i {\left( z  -z_{k}  \right)}\bs a_{k+1}} \dd_{\mathcal{M}}G_{+}^{[k]} \left( G_{+}^{[k]} \right)^{-1} e^{-2\pi i {\left( z  -z_{k}  \right)}\bs a_{k+1}} S_{k+1}^{-1}, \label{eq:3ptM_plusiinf}
\end{align}
and the $z$ derivative 
\begin{align}
    \partial_{z} \Phi_{3pt}^{[k]}(z)\Phi_{3pt}^{[k]}(z)^{-1}\vert_{z\rightarrow +i \infty} 
    &=2\pi i S_{k+1} {\bs a}_{k+1} S_{k+1}^{-1}.\label{eq:3ptz_plusiinf}
\end{align}
substituting \eqref{eq:3ptM_plusiinf}, \eqref{eq:3ptz_plusiinf} in the integrand of \eqref{Itout},
\begin{align}
&\lim_{z\rightarrow i\infty}\tr\bigg\{\bigg(-2\pi i\left( z - \delta_{k,n} \tau \right) S_{k+1}\dd_{\mathcal{M}}\bs a_{k+1}S_{k+1}^{-1}-d_{\mathcal{M}} S_{k+1}S_{k+1}^{-1} \nonumber \\
& \hspace{4.5cm}+\dd_{\mathcal{M}} \Phi_{3pt}^{[k]}(z-z_k)\Phi_{3pt}^{[k]}(z-z_k)^{-1}  \bigg) \times \partial_z\Phi_{3pt}^{[k]}(z-z_k)\Phi_{3pt}^{[k]}(z-z_k)^{-1} \bigg\}\nonumber\\
 &=\tr\bigg\{\bigg(2\pi i  \delta_{k,n} \tau  S_{k+1}\dd_{\mathcal{M}}\bs a_{k+1}S_{k+1}^{-1} - 2\pi i S_{k+1} z_{k} \dd_{\mathcal{M}}\bs a_{k+1} S_{k+1}^{-1} \nonumber \\
& +  S_{k+1}e^{2\pi i {\left( z  -z_{k}  \right)}\bs a_{k+1}} \dd_{\mathcal{M}}G_{+}^{[k]} \left( G_{+}^{[k]} \right)^{-1} e^{-2\pi i {\left( z  -z_{k}\right)}\bs a_{k+1}} S_{k+1}^{-1} \bigg) 2\pi i S_{k+1} {\bs a}_{k+1} S_{k+1}^{-1} \bigg\} \nonumber \\
    &=(2\pi i ) \tr\left( 2\pi i \tau \dd_{\mathcal{M}} {\bs a}_{1} {\bs a}_{1} - z_{k} \dd_{\mathcal{M}} {\bs a}_{k+1} {\bs a}_{k+1} + a_{k+1} \dd_{\mathcal{M}}G_{+}^{[k]} \left( G_{+}^{[k]} \right)^{-1} \right).
\end{align}
In the last line we used that ${\bs a}_{n+1} = {\bs a}_{1}$. Now the term \begin{align}
    \widetilde{I}_{out}^{[k]}&:=-\oint_{\cC_{out}^{[k]}}\frac{\dd z}{2\pi i}\tr\bigg\{\bigg(-2\pi i\left( z - \delta_{k,n} \tau \right) S_{k+1}\dd_{\mathcal{M}}\bs a_{k+1}S_{k+1}^{-1}-\dd_{\mathcal{M}} S_{k+1}S_{k+1}^{-1} \nonumber \\
& \hspace{4.5cm}+\dd_{\mathcal{M}} \Phi_{3pt}^{[k]}(z-z_k)\Phi_{3pt}^{[k]}(z-z_k)^{-1}  \bigg) \times \partial_z\Phi_{3pt}^{[k]}(z-z_k)\Phi_{3pt}^{[k]}(z-z_k)^{-1} \bigg\} \nonumber \\
&=-\oint_{\cC_{out}^{[k]}}\frac{\dd z}{2\pi i } (2\pi i ) \tr\left( 2\pi i \tau \dd_{\mathcal{M}} {\bs a}_{1} {\bs a}_{1} - z_{k} \dd_{\mathcal{M}} {\bs a}_{k+1} {\bs a}_{k+1} + a_{k+1} \dd_{\mathcal{M}}G_{+}^{[k]} \left( G_{+}^{[k]} \right)^{-1} \right) \nonumber \\
& = \tr\left( -2\pi i \tau \dd_{\mathcal{M}} {\bs a}_{1} {\bs a}_{1} + z_{k} \dd_{\mathcal{M}} {\bs a}_{k+1} {\bs a}_{k+1} - a_{k+1} \dd_{\mathcal{M}}G_{+}^{[k]} \left( G_{+}^{[k]} \right)^{-1} \right)\label{eq:Itout1}
\end{align}

Gathering the expressions \eqref{diff_I1}, \eqref{eq:Itin1}, \eqref{eq:Itout1}, the integral $I_{1}$ is
\begin{align}\label{eq:I_1}
    I_{1} &= \sum_{k=1}^{n}\bigg(\tr\bs m_k\dd_{\mathcal{M}} G_{k}G_{k}^{-1}-\tr\bs m_k\dd_{\mathcal{M}} G_0^{[k]}\left(G_0^{[k]}\right)^{-1} +\tr\bs a_k\dd_{\mathcal{M}} G_{-}^{[k]}\left(G_{-}^{[k]}\right)^{-1} \nonumber \\
    & \hspace{8.5cm}-\tr\bs a_{k+1}\dd_{\mathcal{M}} G_{+}^{[k]}\left(G_{+}^{[k]}\right)^{-1}\bigg) \nonumber \\
    & -i\pi \tr \tau \dd_{\mathcal{M}} {\bs a}_{1}^2 + \sum_{k=1}^{n} i \pi \tr z_{k} \dd_{\mathcal{M}} \left({\bs a}_{k+1}^2 - {\bs a}_{k}^2 \right) -2\pi i\oint_{\cC_{in}^{[1]}} \dd z \tr \left\{\dd_{\mathcal{M}}\bs Q\Phi(z)^{-1}\bs a_1\Phi(z)\right\}.
\end{align}
\end{itemize}
Substituting the expressions \eqref{eq:I_2}, \eqref{eq:I_3}, \eqref{eq:I_1} in \eqref{eq:nTorus_trace}, we get the expression \eqref{eq:der_Fred_mon} for the derivative of the Fredholm determinant w.r.t the monodromy data.
\end{proof}
Having computed the derivative of the Fredholm determinant, we are ready to turn to the isomonodromic tau function \eqref{eq:Thm2}.
\begin{theorem}\label{thm:TauGenFn}
The full parametric dependence of the tau function $\T_H$ is 
\begin{align}
\dd\log\T_H 
& =\omega-\omega_{3pt}, \label{eq:TauHGen}
\end{align}
where
\begin{equation}\label{eq:OmegaOmega3pt}
\omega_{3pt}:=-\sum_{k=1}^n\left(\tr\bs a_k\dd_{\mathcal{M}} G_{-}^{[k]}\left(G_{-}^{[k]}\right)^{-1} - \tr\bs m_k\dd_{\mathcal{M}} G_0^{[k]}\left(G_0^{[k]}\right)^{-1}-\tr\bs a_{k+1}\dd_{\mathcal{M}} G_{+}^{[k]}\left(G_{+}^{[k]}\right)^{-1}\right),
\end{equation}
\begin{align}\label{eq:OmegaOmega0}
	\omega=\sum_{j=1}^NP_j \dd_{\mathcal{M}} Q_j+\sum_{k=1}^n\tr\bs m_k\dd_{\mathcal{M}} G_{k}G_{k}^{-1}+\sum_{k=1}^n H_k \dd z_k+\frac{1}{2\pi i}H_\tau \dd\tau,
\end{align}
$P_{i}, Q_{i}$ are the dynamical variables in \eqref{eq:nptL}, $m_{k}$, $a_{k}$ constitute the monodromy data (see figure \ref{fig:nptTorusPants}), $\tau$ is the modular parameter, $H_{k}$, $H_{\tau}$ are the Hamiltonians \eqref{eq:IsomHams}, the matrices $G_{k}$ diagonalise the linear system on the $n$-point torus \eqref{eq:AGmGinv}, and the matrices $G_{\pm}, G_{0}$ diagonalise the 3-point linear system \eqref{eq:AGmGinv3pt}.
\end{theorem}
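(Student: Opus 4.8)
The plan is to assemble $\dd\log\T_H$ out of three ingredients that are already in place: the Fredholm determinant representation \eqref{eq:Thm2} of $\T_H$, the monodromy derivative of that determinant from Proposition \ref{prop:derFred}, and the time derivative \eqref{eq:tder}. Writing $\dd=\dd_{\mathcal{M}}+\dd_\tau+\sum_{k}\dd_{z_k}$ as in \eqref{eq:Tr}, we have $\dd\log\T_H=\dd_{\mathcal{M}}\log\T_H+\bigl(\dd_\tau+\sum_{k}\dd_{z_k}\bigr)\log\T_H$, and \eqref{eq:tder} already identifies the second summand with $\frac{1}{2\pi i}H_\tau\dd\tau+\sum_{k=1}^n H_k\dd z_k$. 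So the only remaining task is to compute $\dd_{\mathcal{M}}\log\T_H$ by differentiating \eqref{eq:Thm2}.

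First I would differentiate the scalar prefactor of \eqref{eq:Thm2} with respect to $\mathcal{M}$. Since $\tau$, the $z_k$, $\rho$, $\eta(\tau)$ and the constant $\tr(\mathbb{1}/6)$ are all inert under $\dd_{\mathcal{M}}$, this produces exactly three terms: $i\pi\tau\,\dd_{\mathcal{M}}\tr\bs a_1^2$ from $e^{i\pi\tau\tr(\bs a_1^2+\mathbb{1}/6)}$, the term $-\dd_{\mathcal{M}}\log\prod_i\theta_1(Q_i-\rho)$ from the $\theta_1$ and $\eta$ factors, and $-i\pi\sum_{k=1}^n z_k\,\dd_{\mathcal{M}}(\tr\bs a_{k+1}^2-\tr\bs a_k^2)$ from the last product. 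I would then add these to the right-hand side of \eqref{eq:der_Fred_mon}: the $\dd_{\mathcal{M}}\log\prod_i\theta_1(Q_i-\rho)$ contributions cancel, the $i\pi\tau\,\dd_{\mathcal{M}}\tr\bs a_1^2$ contributions cancel (using $\tr(\tau\,\dd_{\mathcal{M}}\bs a_1^2)=\tau\,\dd_{\mathcal{M}}\tr\bs a_1^2$), and the $z_k$-dependent contributions cancel. In particular all dependence on the arbitrary parameter $\rho$ disappears, which is the built-in consistency check that $\dd\log\T_H$ is intrinsic.

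Collecting what is left, I expect $\dd_{\mathcal{M}}\log\T_H=\tr(\bs P\,\dd_{\mathcal{M}}\bs Q)+\sum_{k=1}^n\tr\bs m_k\,\dd_{\mathcal{M}} G_kG_k^{-1}-\omega_{3pt}$, with $\omega_{3pt}$ precisely as in \eqref{eq:OmegaOmega3pt}, and by Notation \ref{not:bold} one has $\tr(\bs P\,\dd_{\mathcal{M}}\bs Q)=\sum_{j=1}^N P_j\,\dd_{\mathcal{M}} Q_j$. Adding the time derivative from \eqref{eq:tder}, and using that $G_{-}^{[k]},G_{0}^{[k]},G_{+}^{[k]}$ depend on the monodromy data alone so that $\omega_{3pt}$ carries no $\dd_t$ component, I then obtain $\dd\log\T_H=\omega-\omega_{3pt}$ with $\omega$ as in \eqref{eq:OmegaOmega0}; the pieces $\sum_j P_j\,\dd_{\mathcal{M}} Q_j$, $\sum_k\tr\bs m_k\,\dd_{\mathcal{M}} G_kG_k^{-1}$ and $\sum_k H_k\dd z_k+\frac{1}{2\pi i}H_\tau\dd\tau$ making up $\omega$ come from the global solution $\Phi$, while $\omega_{3pt}$ is built from the trinion solutions, which is the splitting announced in \eqref{SPLIT}.

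The analytic substance of the proof is entirely contained in Proposition \ref{prop:derFred}; at the level of Theorem \ref{thm:TauGenFn} the main — and rather mild — obstacle is organizational: being careful about which of $\tau$, the $z_k$ and $\rho$ are held fixed under $\dd_{\mathcal{M}}$, and verifying that the prefactor differentiation cancels exactly the three "spurious" blocks of \eqref{eq:der_Fred_mon} — the $\theta_1(Q-\rho)$ block, the $\tau\bs a_1^2$ block and the $z_k$ block — so that the remainder reorganizes cleanly into the global piece $\omega$ and the local piece $\omega_{3pt}$.
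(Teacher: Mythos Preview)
Your proposal is correct and follows exactly the paper's own approach: differentiate the scalar prefactor of \eqref{eq:Thm2} with respect to the monodromy data, observe that the resulting terms cancel the $\theta_1(Q_i-\rho)$, $\tau\,\bs a_1^2$ and $z_k(\bs a_{k+1}^2-\bs a_k^2)$ blocks in \eqref{eq:der_Fred_mon}, and then combine with the time derivative \eqref{eq:tder} to assemble $\omega-\omega_{3pt}$. If anything, you are more explicit than the paper about the individual cancellations and about why $\omega_{3pt}$ has no $\dd_t$ component.
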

\begin{proof}
To compute the full parametric dependence of the isomonodromic tau function $\T_H$, we need to differentiate the prefactor in equation \eqref{eq:Thm2}:
\begin{equation}
\begin{split}
& \dd_{\mathcal{M}}\log\left(e^{-i \pi N \rho} \prod_{i=0}^N\frac{\eta(\tau)}{\theta_1\left(Q_i-\rho \right)}\prod_{k=1}^n e^{-i\pi z_k\left(\tr\bs a_{k+1}^2-\tr\bs a_{k}^2\right)}  e^{i\pi\tau\tr\left(\bs a_1^2+\frac{\mathbb{I}}{6} \right)} \right)=
\\
& =\bs-\sum_{i=1}^N\dd_{\mathcal{M}} Q_i 
\frac{\theta_1'(Q_i-\rho)}{\theta_1(Q_i-\rho)}-i\pi\sum_{k=1}^nz_k\tr(\dd_{\mathcal{M}}\bs a_{k+1}^2-\dd_{\mathcal{M}}\bs a_k^2) + i\pi \tau \dd_{\mathcal{M}}\tr {\bs a}_{1}^2.
\end{split}
\end{equation}
\eqref{eq:TauHGen} then follows from \eqref{eq:tder} and Proposition \ref{prop:derFred}.
\end{proof}
\begin{remark}
Note that $\dd\log\T_H$ is automatically a closed  1-form on the space $\mathbb{T}_{1,n}\times\mathcal{M}$, because $\T_H$ defined by equation \eqref{eq:Thm2} is a (locally) well-defined function of monodromies and times, so its partial derivatives commute.
\end{remark}
The following nontrivial statement follows from Theorem \ref{thm:TauGenFn}.
\begin{corollary}\label{thm:OmegaConst}
    The exterior derivative of the one-form $\omega$ in \eqref{eq:OmegaOmega0} is a (time-independent) two-form on $T^*{\cal M}_{1,n}$, i.e.
\begin{align}
    \dd\omega(\partial_t,\partial_{\mu_i})=\dd\omega(\partial_t,\partial_{t'})=0, && \partial_t\dd\omega\big|_{\{\mu_j\}\,\mathrm{fixed}}=0,
\end{align}
for every monodromy coordinate $\mu_i$ and time-coordinate $t$, $t'$.
\end{corollary}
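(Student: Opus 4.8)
The plan is to deduce Corollary \ref{thm:OmegaConst} directly from Theorem \ref{thm:TauGenFn} together with the two structural facts established there: first, that $\dd\log\T_H$ is closed as a one-form on $\mathbb{T}_{1,n}\times\mathcal{M}$ (the Remark), and second, the explicit splitting $\dd\log\T_H=\omega-\omega_{3pt}$ in which $\omega_{3pt}$ depends only on the monodromy data (the $G_-^{[k]}$, $G_0^{[k]}$, $G_+^{[k]}$ and the exponents $\bs a_k$, $\bs m_k$), with \emph{no} dependence on the times $\tau, z_k$. The immediate consequence is that $\dd\omega=\dd\omega_{3pt}$, so $\dd\omega$ is pulled back from $\mathcal{M}$, i.e. it contains no component along any $\dd t$; this is exactly the content of the first two vanishing statements $\dd\omega(\partial_t,\partial_{\mu_i})=\dd\omega(\partial_t,\partial_{t'})=0$.

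Concretely, I would first write $\dd=\dd_t+\dd_{\mathcal M}$ and expand $\dd\omega$ into bidegree pieces: a $(2,0)$-part purely in times, a $(1,1)$-part mixing times and monodromies, and a $(0,2)$-part purely in monodromies. Since $\omega_{3pt}$ is a one-form on $\mathcal M$ alone, $\dd\omega_{3pt}$ is purely of bidegree $(0,2)$. From $\dd(\dd\log\T_H)=0$ and $\dd\log\T_H=\omega-\omega_{3pt}$ I get $\dd\omega=\dd\omega_{3pt}$ as forms on $\mathbb{T}_{1,n}\times\mathcal M$; matching bidegrees, the $(2,0)$ and $(1,1)$ parts of $\dd\omega$ vanish identically, which gives the first two claimed identities. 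The third identity, $\partial_t\dd\omega\big|_{\{\mu_j\}\text{ fixed}}=0$, then follows because the surviving $(0,2)$-part of $\dd\omega$ equals $\dd_{\mathcal M}\omega_{3pt}$, whose coefficients are built entirely from monodromy data and hence have vanishing derivative with respect to any time $t$.

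The one point that needs care — and which I expect to be the main obstacle — is justifying rigorously that $\omega_{3pt}$ really has \emph{no} implicit time dependence hidden inside the diagonalizing matrices $G_\pm^{[k]}, G_0^{[k]}$ and the exponents. A priori these objects are attached to the three-point problems \eqref{eq:3ptk}, which are rigid: the matrices $A_-^{[k]},A_0^{[k]},A_+^{[k]}$ are fixed by the monodromy data $\bs a_k,\bs a_{k+1},\bs m_k$ up to overall conjugation, and the normalization ambiguity \eqref{eq:DK} is pinned down by the monodromy representation, so $G_\pm^{[k]}, G_0^{[k]}$ are genuinely functions on $\mathcal M$ only. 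I would spell out this rigidity explicitly, perhaps invoking that under isomonodromic flow the $C_k, S_k$ are held constant (the footnote after \eqref{eq:g1_3pt}) and that the three-point solution $\Phi_{3pt}^{[k]}(z)$ in \eqref{eq:3ptk} has coefficient matrix depending only on $A_-^{[k]}, A_0^{[k]}$. Once this independence is secured, the bidegree argument above closes the proof with no further computation; I would phrase the final two sentences as: $\dd_t$ annihilates every coefficient of $\omega_{3pt}$, hence $\partial_t\dd_{\mathcal M}\omega_{3pt}=0$, which together with $\dd\omega=\dd_{\mathcal M}\omega_{3pt}$ at fixed times yields the claim.
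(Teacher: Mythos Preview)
Your proposal is correct and follows essentially the same approach as the paper: the paper's proof simply observes that $\omega_{3pt}$ is a time-independent one-form on $T^*\mathcal{M}_{1,n}$, so $\dd\omega_{3pt}=\dd_{\mathcal M}\omega_{3pt}$ is a time-independent two-form on $T^*\mathcal{M}$, and then uses closedness of $\dd\log\T_H$ to conclude $\dd\omega=\dd\omega_{3pt}$. Your bidegree decomposition and the extra justification that the three-point data $G_\pm^{[k]},G_0^{[k]},\bs a_k,\bs m_k$ carry no hidden time dependence are merely more explicit articulations of what the paper takes for granted.
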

\begin{proof}
Since $\omega_{3pt}$ is a time-independent one-form on $T^*\mathcal{M}_{1,n}$, $\dd\omega_{3pt}=\dd_{\cal M}\omega_{3pt}$ is a time-independent two-form on $T^*\mathcal{M}$.
On the other hand, from $\dd\log\T=0$ we have
\begin{equation}
    \dd\omega=\dd\omega_{3pt}.
\end{equation}
\end{proof}
\subsection{The closed one-form as the generating function of the Riemann-Hilbert map}\label{subsec:1pt}
The closed one-form $\dd\log\T$ has an elegant geometric interpretation as the generating function of the extended monodromy map
\begin{equation}
\mathcal{A}_{1,n} \rightarrow \mathcal{M}_{1,n}. \label{eq:mon_map_1n}
\end{equation}
Let us consider for illustration purposes the case of the one-punctured torus with the singularity at $z=0$, with the Lax pair
\begin{align}
    L_{z}^{(CM)} =\left(\begin{array}{cc}
        P(\tau) & mx(-2Q(\tau),z) \\
        mx(2Q(\tau),z) & -P(\tau)
    \end{array}\right), && L_{\tau}^{(CM)}=m\left( \begin{array}{cc}
    	0 & y(-2Q,z), \\
    	y(2Q,z) & 0
    \end{array} \right), \label{linear_systemCM}
\end{align}
where $y(\xi,z):=\partial_\xi x(\xi,z)$. The consistency condition of the above Lax matrices gives the non-autonomous elliptic Calogero-Moser equation
\begin{align}
    2\pi i\frac{\dd P}{\dd\tau}=m^2\wp'(2Q|\tau), && 2\pi i\frac{\dd Q}{\dd\tau}=P,
\end{align}
and its Hamiltonian takes the form 
\begin{gather}
    H_{CM} = P^2-m^2\wp(2Q|\tau)+4\pi im^2\partial_\tau\log\eta(\tau).
\end{gather}
The corresponding monodromy representation is
\begin{align}\label{eq:MAM0MB}
	M_A=e^{2\pi ia\sigma_3}, && M_0=C_0 e^{2\pi i m \sigma_3}C_0^{-1}, && M_B= \frac{1}{\sin 2\pi a}\left(
\begin{array}{cc}
 e^{-\frac{i \nu}{2}  }  \sin (\pi  (2 a-m)) & e^{\frac{i \nu }{2}}  \sin (\pi  m) \\
 -e^{-\frac{i \nu}{2}}  \sin (\pi  m) & e^{\frac{i \nu }{2}}  \sin (\pi  (2 a+m)) \\
\end{array}\right).
\end{align}
The spaces $\cA_{1,1}$ and $\cM_{1,1}$ are parametrerized by $m,P,Q,g,\tau$ and $m,a,\nu,c$ respectively, where $g$ is introduced below, and $c$ parametrizes the freedom of sending $C_0\mapsto C_0e^{c\sigma_3}$ in \eqref{eq:MAM0MB} without changing $M_0$. The variables $a,\nu$ are Darboux coordinates for the Goldman bracket (see Appendix \ref{sec:AppChar}), while the variables $P,Q$ are Darboux coordinates on $\mathcal{A}_{1,1}^{(0)}$. The residue of the Lax matrix
\begin{align}
	\res_{z=0}L_{CM}=-m\sigma_1=G^{-1}m\sigma_3G,\end{align}
where
\begin{align}\label{eq:GG0g}
	G:=e^{\frac{1}{2}g\sigma_3}G_0, && G_0:=\left( \begin{array}{cc}
	1 & -1 \\
	1 & 1		
	\end{array} \right).
\end{align}
The equation \eqref{eq:TauHGen} in the one-punctured case takes the simple form
\begin{align}\label{eq:Omegas1pt}
	\dd\log\T_{CM}=\omega-\omega_{3pt}, && \omega=2P\dd_{\cal M}Q+\frac{1}{2\pi i}H_\tau\dd\tau+m\dd_{\cal M}g, && \omega_{3pt}=ia\dd\nu+m \dd f,
\end{align}
where $f$ is a function only of the extended monodromy data that can be obtained by using the explicit $G$-matrices for the three-point problem from \cite{Its2016}, and whose specific form will not be needed in the following.
\begin{theorem}\label{thm:GenFnCM}
The derivative of the 1-form $\omega$ in \eqref{eq:Omegas1pt} can be written as a closed two-form on $\mathcal{A}_{1,1}$, extending the standard symplectic form\footnote{Note that \(\dd\omega\) is not quite a symplectic form. To make it symplectic one first needs to take into account that we work with extended monodromy space, so there is conjugate variable for \(m\), we call it \(c\). It can also be interpreted as initial condition for the equation \eqref{eq:gdot} and contributes as \(dm\wedge dc\). Then we may either add formal dual variable for \(\tau\), \(h\) and consider extended Hamiltonian \(\mathcal{H}=H-2\pi i h\) in order to describe non-autonomus system as autonomous one by adding equation \(\dot\tau=1\), or just restrict to extended monodromy manifold only.}
 $\dd P\wedge \dd Q$ by including variations of the Casimir $m$ and time $\tau$:
\begin{equation}
    \dd\omega=2\dd P\wedge \dd Q-\dd H_{CM}\wedge\frac{\dd\tau}{2\pi i}+\dd m\wedge \dd g.
\end{equation}
\end{theorem}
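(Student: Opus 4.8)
The plan is to compute $\dd\omega$ directly from its definition \eqref{eq:Omegas1pt}, $\omega=2P\,\dd_{\mathcal{M}}Q+\tfrac{1}{2\pi i}H_\tau\,\dd\tau+m\,\dd_{\mathcal{M}}g$, and reorganise the answer into the claimed shape. For the one-punctured torus one has $\dd=\dd_{\mathcal{M}}+\dd_\tau$ with $\dd_\tau(\,\cdot\,)=\dd\tau\,\partial_\tau(\,\cdot\,)$ and $\partial_\tau$ the isomonodromic flow (monodromy held fixed), so $\dd_{\mathcal{M}}Q=\dd Q-(\partial_\tau Q)\,\dd\tau$ and $\dd_{\mathcal{M}}g=\dd g-(\partial_\tau g)\,\dd\tau$. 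Inserting the Calogero--Moser equation $2\pi i\,\partial_\tau Q=P$ and the equation of motion \eqref{eq:gdot} for the Casimir-conjugate variable $g$, one rewrites $\omega=2P\,\dd Q+m\,\dd g+\Lambda\,\dd\tau$ with $\Lambda:=\tfrac{1}{2\pi i}H_\tau-2P\,\partial_\tau Q-m\,\partial_\tau g$, and then, using $\dd\dd Q=\dd\dd g=\dd\dd\tau=0$ and the fact that the $(\partial_\tau\Lambda)\,\dd\tau$ part of $\dd\Lambda$ is killed by $\wedge\,\dd\tau$,
\[
\dd\omega=2\,\dd P\wedge\dd Q+\dd m\wedge\dd g+(\dd_{\mathcal{M}}\Lambda)\wedge\dd\tau .
\]
Comparing with the target, the theorem is thereby reduced to the single scalar identity $\dd_{\mathcal{M}}\Lambda=-\tfrac{1}{2\pi i}\,\dd_{\mathcal{M}}H_{CM}$, i.e.\ to the statement that $\Lambda+\tfrac{1}{2\pi i}H_{CM}$ is a function of $\tau$ alone; closedness of $\dd\omega$ is then automatic since it is $\dd$ of a one-form.

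The remaining work is to verify that identity. First I would identify the isomonodromic Hamiltonian $H_\tau=\oint_A\tfrac12\tr\big(L_z^{(CM)}\big)^2\,\dd z$ of \eqref{eq:IsomHams} with $H_{CM}=P^2-m^2\wp(2Q|\tau)+4\pi i\,m^2\partial_\tau\log\eta(\tau)$ up to a term depending on $\tau$ only; this is a residue computation with the explicit Lax matrix \eqref{linear_systemCM}, the diagonal part producing $P^2$ and the off-diagonal part together with the Lam\'e-function/elliptic identities producing the $\wp$ and $\eta$ contributions. I would then substitute the explicit form of \eqref{eq:gdot} --- which, $g$ being conjugate to the Casimir $m$, is controlled by $\partial_m H_{CM}$ --- and check that in $\Lambda+\tfrac{1}{2\pi i}H_{CM}$ all $P$- and $Q$-dependence drops out: the $P^2$ terms assemble by virtue of $2\pi i\,\partial_\tau Q=P$, and the terms proportional to $\wp(2Q|\tau)$ cancel against the matching piece of $\partial_m H_{CM}$, leaving only a combination of $\partial_\tau\log\eta(\tau)$ that $\dd_{\mathcal{M}}$ annihilates.

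I expect the only real difficulty to be bookkeeping: keeping all numerical and $2\pi i$ factors mutually consistent between the JMU normalisation $2\pi i\,\partial_\tau\log\T_H=H_\tau$ of \eqref{eq:TauDef}, the Poisson/Hamiltonian conventions behind $H_{CM}$ and \eqref{eq:gdot}, and the orientations/signs of the residues of \eqref{eq:nTorus_trace}, and using the quasimodular identities (the $\tau$-derivatives of $\wp(2Q|\tau)$ and of $\log\eta(\tau)$, i.e.\ the heat equation for $\theta_1$) sharply enough that no residual $Q$-dependence survives in $\dd_{\mathcal{M}}\Lambda$. As a conceptual cross-check that avoids this computation: closedness of $\dd\log\T_{CM}=\omega-\omega_{3pt}$ gives $\dd\omega=\dd\omega_{3pt}=i\,\dd a\wedge\dd\nu+\dd m\wedge\dd f$ (since $f$ depends only on the monodromy data, $\dd f=\dd_{\mathcal{M}}f$), and because $(a,\nu)$ are Darboux coordinates for the Goldman bracket on the leaf $m=\mathrm{const}$ this exhibits $\dd\omega$ as the pullback to $\mathcal{A}_{1,1}$ of the extended character-variety symplectic form; equating it with $2\,\dd P\wedge\dd Q-\dd H_{CM}\wedge\tfrac{\dd\tau}{2\pi i}+\dd m\wedge\dd g$ is then precisely the statement that the extended Riemann--Hilbert map \eqref{eq:mon_map_1n} is the symplectomorphism generated by $\dd\log\T_{CM}$.
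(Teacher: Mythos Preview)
Your approach is essentially the same as the paper's: both compute $\dd\omega$ directly and feed in the isomonodromic equations of motion for $P$, $Q$ and $g$. The only difference is ordering---you first rewrite $\omega$ in terms of the full differentials $\dd Q,\dd g,\dd\tau$ and then apply $\dd$, while the paper applies $\dd$ first (keeping $\dd_{\cal M}$) and only at the end trades $\dd_{\cal M}P,\dd_{\cal M}Q,\dd_{\cal M}g$ for $\dd P,\dd Q,\dd g$. Your route is arguably cleaner, since it isolates the entire task as the single identity $\dd_{\cal M}\big(\Lambda+\tfrac{1}{2\pi i}H_{CM}\big)=0$.

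The one genuine gap is your treatment of $\dot g$. You invoke \eqref{eq:gdot} as an input, but in the paper this equation is actually \emph{derived inside the proof}, by reading off the $z\to 0$ limit of the $\tau$-flow $2\pi i\,\partial_\tau\Phi=\Phi\,L_\tau^{(CM)}$: from the local form $\Phi\sim C z^{m\sigma_3}G$ with $G=e^{\frac12 g\sigma_3}G_0$ one gets $2\pi i\,G^{-1}\partial_\tau G=\lim_{z\to0}L_\tau^{(CM)}=-m\,\wp(2Q|\tau)\sigma_1$, whence $2\pi i\,\dot g=-2m\,\wp(2Q|\tau)$. Your heuristic ``$\dot g$ is controlled by $\partial_m H_{CM}$'' is not quite right as stated: $\partial_m H_{CM}=-2m\,\wp(2Q|\tau)+8\pi i\,m\,\partial_\tau\log\eta(\tau)$ contains an extra $\eta$-term, so the naive Hamiltonian argument does not give \eqref{eq:gdot} on the nose. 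This is exactly the kind of $\eta$/$2\pi i$ bookkeeping you flagged; the safe route is the Lax-matrix derivation the paper uses. Once $2\pi i\,\dot g=-2m\,\wp(2Q|\tau)$ is in hand, your $\Lambda$-identity reduces to the check that $H_\tau$ and $H_{CM}$ agree up to a $(P,Q)$-independent additive piece, after which the $\dd m\wedge\dd\tau$ contributions cancel and the claimed formula follows.
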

\begin{proof}
	Let us start by computing the total exterior derivative of $\omega$. Using the notation $\dot{f}:=\partial_\tau f$,
	\begin{equation}
	\begin{split}
		\dd\omega & =\dd\left(2P\dd_{\cal M}Q+H_\tau\frac{\dd\tau}{2\pi i}+m\dd_{\cal M}g \right) \\
		& =2 \dd_{\cal M}P\wedge\dd_{\cal M}Q+2\dot{P}\dd\tau\wedge\dd_{\cal M}Q+2P\dd\tau\wedge\dd_{\cal M}\dot{Q} \\
		& +\left(\frac{\partial H_{CM}}{\partial Q}\dd_{\cal M}Q+\frac{\partial H_{CM}}{\partial P}\dd_{\cal M}P+2m\wp(2Q|\tau)\dd m \right)\wedge\frac{\dd\tau}{2\pi i}+\dd m\wedge \dd_{\cal M}g+\dd\tau\wedge\dd_{\cal M}\dot{g}\\
		& =2\dd_{\cal M}P\wedge \dd_{\cal M}Q+\left(\frac{\partial H_{CM}}{\partial Q}-2\dot{P} \right)\dd_{\cal M}Q\wedge\dd\tau+\left(\frac{\partial H_{CM}}{\partial P}-4\pi i\dot{Q} \right)\dd_{\cal M}P\wedge\frac{\dd\tau}{2\pi i} \\
		& +\dd m\wedge \dd_{\cal M}g+\dd\tau\wedge\dd_{\cal M}\dot{g}-2m\wp(2Q|\tau)\dd m\wedge \frac{\dd\tau}{2\pi i} \\
		& =2\dd_{\cal M}P\wedge \dd_{\cal M}Q+\dd m\wedge \dd_{\cal M}g-4m^2\wp'(2Q|\tau)\dd_{\cal M}Q\wedge\dd\tau\\
		&+m\dd\tau\wedge\dd_{\cal M}\dot{g}-2m\wp(2Q|\tau)\dd m\wedge \frac{\dd\tau}{2\pi i},
	\end{split}
	\end{equation}
	where in the last line we used the Hamiltonian equations $2\pi i\dot{Q}=P$, $2\pi i\dot{P}=m^2\wp'(2Q)$. To go further, we need to compute $\dot{g}$. Consider the local behaviour at zero of the Lax equation $2\pi i\partial_\tau\Phi=\Phi L_{\tau,CM}$. The LHS and RHS behave respectively as
	\begin{align}
		2\pi i\partial_\tau\Phi\sim Cz^{m\sigma_3}\partial_\tau G, && \Phi L_\tau\sim C z^{m\sigma_3}G\lim_{z\rightarrow 0}L_{\tau,CM}(z,\tau),
	\end{align}
	implying that
	\begin{equation}
		2\pi i G^{-1}\partial_\tau G=i\pi G_0^{-1}\partial_\tau g\sigma_3G_0=\lim_{z\rightarrow0}L_{\tau,CM}(z,\tau)=-m\wp(2Q|\tau)\sigma_1,
	\end{equation}
	i.e.
	\begin{equation}\label{eq:gdot}
		2\pi i\dot{g}=-2m\wp(2Q|\tau).
	\end{equation}
	Plugging this in the expression for $\dd\omega$, we find
	\begin{equation}
		\begin{split}
			\dd\omega & =2\dd_{\cal M}P\wedge \dd_{\cal M}Q+\dd m\wedge \dd_{\cal M}g-4m^2\wp'(2Q|\tau)\dd_{\cal M}Q\wedge\dd\tau\\
		&-m\frac{\dd\tau}{2\pi i}\wedge\dd_{\cal M}\left(2m\wp(2Q|\tau) \right)-2m\wp(2Q|\tau)\dd m\wedge \frac{\dd\tau}{2\pi i} \\
		& =2\dd_{\cal M}P\wedge\dd_{\cal M}Q+\dd m\wedge\dd_{\cal M}g.
		\end{split}
	\end{equation}
	This makes it clear that $\dd\omega$ has no $\dd\tau$-component, as implied by Corollary \ref{thm:OmegaConst}.\footnote{One can also explicitly check its time-independence:
\begin{equation}
\begin{split}
	2\pi i\partial_\tau\dd\omega & =2\dd_{\cal M}\dot{P}\wedge \dd_{\cal M}Q+2\dd_{\cal M}P\wedge \dd_{\cal M}\dot{Q}+\dd m\wedge \dd_{\cal M}\dot{g} \\
	& = 2\dd_{\cal M}\left(m^2\wp'(2Q|\tau) \right)\wedge\dd_{\cal M}Q+2\dd_{\cal M}P\wedge\dd_{\cal M}P-\dd m\wedge\dd_{\cal M}\left(2m\wp(2Q|\tau) \right)=0.
\end{split}
\end{equation}}
We now rewrite $\dd\omega$ as a 2-form on on $\mathcal{A}_{1,1}$:
	\begin{equation}
	\begin{split}
	    2\dd_{\cal M}P\wedge \dd_{\cal M}Q & =2\dd P\wedge\dd Q-2\dot{P}\dd\tau\wedge\dd_{\cal M}Q+2\dot{Q}\dd\tau\wedge\dd_{\cal M}P \\
	    & =2\dd P\wedge \dd Q+\frac{\dd\tau}{2\pi i}\wedge\left(-2m^2\wp
	    '(2Q|\tau)\dd_{\cal M}Q+2P\dd_{\cal M}P \right)\\
	    & =2\dd P\wedge\dd Q-\dd H\wedge \frac{\dd\tau}{2\pi i}-\dd_t g\wedge \dd m,
	\end{split}
	\end{equation}
	leading to
	\begin{equation}
	\begin{split}
	    \dd\omega &=2\dd P\wedge\dd Q-\dd H\wedge \frac{\dd\tau}{2\pi i}-\dd_t g\wedge \dd m+\dd m\wedge\dd_{\cal M}g\\
	    & =2\dd P\wedge\dd Q+\dd m\wedge\dd g-\dd H\wedge \frac{\dd\tau}{2\pi i}.
	\end{split}
	\end{equation}
\end{proof}

\begin{corollary}
    The tau function $\T_{CM}$ is the generating function for the extended monodromy map \eqref{eq:mon_map_1n} on the one-punctured torus, i.e. it is the difference of symplectic potentials on $\mathcal{A}_{1,1}$ and $\mathcal{M}_{1,1}$ respectively.
\end{corollary}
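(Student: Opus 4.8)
The plan is to read the statement off directly from Theorem \ref{thm:GenFnCM} together with the splitting \eqref{eq:Omegas1pt}, once the two symplectic structures involved have been identified. Recall first that $\dd\log\T_{CM}=\omega-\omega_{3pt}$ is by construction an exact one-form, hence closed, so that $\dd\omega=\dd\omega_{3pt}$; this is exactly the content of (the proof of) Corollary \ref{thm:OmegaConst}.

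First I would identify the symplectic form on the source space. By Theorem \ref{thm:GenFnCM},
\[
\dd\omega=2\,\dd P\wedge\dd Q-\dd H_{CM}\wedge\frac{\dd\tau}{2\pi i}+\dd m\wedge\dd g,
\]
which, on the extended space $\mathcal{A}_{1,1}$ — where the Casimir $m$ acquires the conjugate variable $c$ realised as the initial datum of \eqref{eq:gdot}, and the non-autonomous flow in $\tau$ is autonomised by adjoining a momentum $h$ dual to $\tau$ with extended Hamiltonian $\mathcal{H}=H_{CM}-2\pi i h$ and $\dot\tau=1$ (or, alternatively, by restricting to the extended monodromy manifold, as in the footnote to Theorem \ref{thm:GenFnCM}) — is the canonical symplectic two-form. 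Thus $\omega$ is a symplectic (Liouville) potential on $\mathcal{A}_{1,1}$.

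Next I would identify the symplectic form on the target. Differentiating $\omega_{3pt}=ia\,\dd\nu+m\,\dd f$ gives
\[
\dd\omega_{3pt}=i\,\dd a\wedge\dd\nu+\dd m\wedge\dd f,
\]
a two-form expressed entirely in the extended monodromy data. Since $a,\nu$ are Darboux coordinates for the Goldman bracket on the symplectic leaf (Appendix \ref{sec:AppChar}), the first term is the Goldman form; the closedness $\dd\omega=\dd\omega_{3pt}$, combined with the fact that $f$ depends on the monodromy data only, forces $f$ to agree — up to a function of $m$ alone, which does not affect $\dd\omega_{3pt}$ — with the extended conjugate $c$ of $m$. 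Hence $\dd\omega_{3pt}$ is the symplectic form of $\mathcal{M}_{1,1}$ and $\omega_{3pt}$ is a symplectic potential for it; equivalently, $\dd\omega=\dd\omega_{3pt}$ states that the monodromy map \eqref{eq:mon_map_1n} pulls the symplectic form of $\mathcal{M}_{1,1}$ back to that of $\mathcal{A}_{1,1}$, i.e.\ it is a symplectomorphism.

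Finally I would conclude: by \eqref{eq:Omegas1pt}, $\dd\log\T_{CM}=\omega-\omega_{3pt}$ is the difference of the two symplectic potentials, which is precisely the defining property of the generating function of the symplectomorphism \eqref{eq:mon_map_1n}. The main obstacle is entirely bookkeeping rather than computation: one must correctly enlarge $\mathcal{A}_{1,1}$ (adjoining the conjugate $c$ of $m$, and, if a genuinely non-degenerate form is desired, the conjugate $h$ of $\tau$) so that $\dd\omega$ is symplectic rather than merely presymplectic, and check that through this enlargement $\log\T_{CM}$ retains its interpretation as a generating function; the identification of $f$ with $c$ modulo $m$-dependent shifts uses only closedness and the monodromy-dependence of $f$, and needs no further calculation.
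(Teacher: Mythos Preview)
Your proposal is correct and follows essentially the same route as the paper: use $\dd\log\T_{CM}=\omega-\omega_{3pt}$, invoke Theorem \ref{thm:GenFnCM} to identify $\dd\omega$ as the (extended) symplectic form on $\mathcal{A}_{1,1}$, and differentiate $\omega_{3pt}$ to recognise it as a potential for the extended Goldman form on $\mathcal{M}_{1,1}$. The paper is in fact terser than you are: it only checks the restriction $\dd\omega_{3pt}\big|_{\dd m=0}=i\,\dd a\wedge\dd\nu=\frac{i}{2\pi}\Omega_G$ and appeals to \cite{Alekseev1995} for the extended Goldman structure, whereas you supply the extra bookkeeping (identifying $f$ with $c$ up to a function of $m$, and autonomising in $\tau$) explicitly.
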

\begin{proof}
This follows from the equation $\dd\log\T=\omega-\omega_{3pt}$ together with the following two facts:
\begin{enumerate}
    \item Theorem \ref{eq:Thm2}, stating that $\omega$ is a symplectic potential on $\mathcal{A}_{1,1}$.
    \item $\omega_{3pt}$ is a symplectic potential for the extended Goldman's symplectic form \cite{Alekseev1995}: its exterior derivative is a closed 2-form on the character variety such that its restriction on the symplectic leaves yields the Goldman symplectic form:
\begin{equation}
	\dd\omega_{3pt}\big|_{\dd m=0}=i\dd a\wedge \dd\nu \mathop{=}^{\eqref{eq:OmegaGApp}} \frac{i}{2\pi } \Omega_{G}. \label{sym:goldman}
\end{equation}
\end{enumerate}
\end{proof}

This statement is readily generalized to the $SL(N)$ case with arbitrary number of punctures by using the explicit one-forms and \eqref{eq:OmegaOmega0} and \eqref{eq:OmegaOmega3pt}. Among its consequences is the extension of the point of view of \cite{Bertola2019}, identifying the isomonodromic tau function with the generating function of the extended monodromy map, to the case of $SL(N)$ Fuchsian systems on elliptic curves. Indeed, the existing results for Fuchsian systems on the Riemann sphere can be derived directly from the Fredholm determinant representation of the tau function, in much the same way as we did in the genus one case, see Appendix \ref{sec:nSphere}.

\section{Concluding remarks}
The genus zero analogue of our one-form $\dd\log\T_H$ computed in Theorem \ref{thm:TauGenFn} was used in \cite{Its2016} to compute the connection constant for the Painlev\'e VI and II equation, which is the proportionality constant between tau functions in different asymptotic regimes. In the case of Painlev\'e VI, it corresponds to the transformation $t\mapsto(1-t)$, which belongs to the modular group of the four-punctured sphere. In the case of the torus, there is a new type of connection constant corresponding to an S-duality transformation $\tau\mapsto -1/\tau\in SL(2,\mathbb{Z})$, the modular group of the torus.  The problem of computing such constants has been traditionally an outstanding problem in the theory of Painlev\'e equations, and we defer the explicit computation of the modular connection constant to our upcoming paper. Such connection constants can be viewed as originating from a change in monodromy coordinates induced by a change in the pants decomposition (for example, viewing the torus as a pair of pants glued along the A- or B-cycle, in the case of $\tau\mapsto-1/\tau$).

 The higher genus generalization of the Lax matrix \eqref{eq:nptL} is a twisted meromorphic differential on a punctured Riemann surface of genus $g$  \cite{Levin1999,Levin2013,DelMonte2020thesis} , with periodicity properties along the A- and B-cycles given by twist matrices $T_{A_j},\,T_{B_j}\in SL(N,\mathbb{C})$:
\begin{align}
	L_z(\gamma_{A_j}z)=T_{A_j}^{-1}L_z(z)T_{A_j}, && L_z(\gamma_{B_j}z)=T_{B_j}^{-1}L_z(z)T_{B_j}, && \prod_{j=1}^{g}T_{A_j}T_{B_j}T_{A_j}^{-1}T_{B_j}^{-1}=\mathbb{1}_N.
\end{align}
As usual, $L_z(z)$ will have $n$ simple poles, with residues \eqref{eq:AGmGinv}. The twist matrices encode the canonical variables of the isomonodromic system, and different types of twists will correspond to different flat bundles on $C_{g,n}$. 

The structure \eqref{eq:TauHGen} of the closed 1-form $\omega-\omega_0$ has a rather straightforward generalization, leading to natural expectations on what should happen in the case of a punctured Riemann surface of genus $g\ge 2$. The expectation is that a tau function in that case should satisfy
\begin{equation}\label{eq:HigherGenusOneForm}
	\dd\log\T_H^{(g,n)}=\sum_{a=1}^{(N^2-1)(g-1)}P_a\dd_{\cal M}Q_a+\sum_{j=1}^{3g-3+n}H_j\dd t_j+\sum_{k=1}^n\tr\bs m_kd_{\mathcal{M}} G_{k}G_{k}^{-1}-\omega_{3pt}.
\end{equation}
Here we denoted by $P_a,Q_a$ an appropriate set of Darboux coordinates encoded in the twists, and the $t_j$'s are local coordinates on the Teichm\"uller space $\mathbb{T}_{g,n}$. $\omega_0$ is as before a time-independent one-form depending on the pants decomposition.

While the extension \eqref{eq:HigherGenusOneForm} seems very natural, its derivation would require first the generalization to higher genus Riemann surfaces of the Fredholm determinant construction of \cite{DelMonte2020}, which we leave to future work.

\begin{appendix}	
\section{Character variety for the one-punctured torus}\label{sec:AppChar}
The character variety of the $SL(2)$ one-punctured torus is the cubic surface
\begin{equation}
\mathcal{M}_{1,1}^{(0)}:=\left\{M_A,M_B\in SL(2):\, \tr\left(M_A^{-1}M_B^{-1}M_A M_B \right)=M_0 \right\}/\sim.
\end{equation}
The trace coordinates
\begin{align}
p_A:=\tr M_A, && p_B:=\tr M_B, && p_{AB}:=\tr M_AM_B, && p_0:=\tr M_0
\end{align}
satisfy the relation
\begin{equation}
\begin{split}
p_0=\tr (M_A^{-1}M_B^{-1}M_AM_B)&=\tr (M_AM_BM_A^{-1})\tr (M_B)-\tr (M_AM_BM_A^{-1}M_B) \\
&=(\tr M_B)^2-\tr(M_AM_B)\tr(M_AM_B^{-1})+\tr(M_A^2) \\
&=(\tr M_B)^2-\tr(M_AM_B)\left[\tr (M_A)\tr (M_B)-\tr(M_AM_B) \right]+\tr(M_A)^2-2 \\
&=p_A^2+p_B^2+p_{AB}^2-p_Ap_Bp_{AB}-2,
\end{split}\label{cubic:1pt}
\end{equation}
so that the space of monodromy data is described by the cubic surface
\begin{equation}
\mathcal{W}_{1,1}=p_A^2+p_B^2+p_{AB}^2-p_Ap_Bp_{AB}-p_0-2=0
\end{equation}
known as Fricke cubic \cite{Fricke1897}. We used the following identities to obtain \eqref{cubic:1pt}:
\begin{align}\label{eq:SL2id}
\tr(xy)+\tr(xy^{-1})=\tr x\tr y, && \tr(x^2)=(\tr x)^2-2, &&\tr(x^{-1})=\tr(x).
\end{align}
\subsection{Darboux Coordinates}
The functions $p_A,p_B,p_{AB}$ satisfy the Goldman bracket:
\begin{equation}\label{eq:GoldmanPoisson}
\{p_A,p_B\}=p_{AB}-\frac{1}{2}p_Ap_B.
\end{equation}
We now introduce Darboux coordinates for the space $\mathcal{M}^{(0)}_{1,1}$. One possible choice is the one presented in \cite{Nekrasov2011}, that involves hyperbolic functions and square-roots. A more convenient choice, involving only trigonometric functions, is explicit in the parametrization \eqref{eq:MAM0MB}. In terms of $a,\nu,m$, we have
\begin{gather}
p_A=2\cos2\pi a, \nonumber \\
p_B=\frac{\sin(\pi(2a-m))}{\sin2\pi a}e^{-i\nu/2}+\frac{\sin(\pi(2a+m))}{\sin2\pi a}e^{i\nu/2}, \nonumber, \\
p_{AB}=\frac{\sin(\pi(2a-m))}{\sin2\pi a}e^{i(2\pi a-\nu/2)}+\frac{\sin(\pi(2a+m))}{\sin2\pi a}e^{-i(2\pi a-\nu/2)}, \nonumber\\
p_0=2\cos2\pi m.\label{eq:pcoords}
\end{gather}
From the relations \eqref{eq:pcoords}, we see that $a,\nu$ are Darboux coordinates for the Goldman bracket \eqref{eq:GoldmanPoisson}, and the 
Goldman's symplectic form is
\begin{equation}\label{eq:OmegaGApp}
\Omega_G=2\pi \dd a\wedge \dd\nu.
\end{equation}
\section{Tau function for the $n$-punctured sphere}\label{sec:nSphere}
The construction of Section \ref{sec:MainThms} applies (in a much simpler context) also to the case of an isomonodromic problem on the Riemann sphere with $n$ Fuchsian singularities. The tau function in this case was written as a Fredholm determinant in \cite{Gavrylenko2016b,Cafasso2017}:
\begin{equation}
	\T_H^{(g=0)}= \det_{\cH_{+}}\left[\cP_{\Sigma,+}^{-1} \cP_{\oplus,+} \right] \prod_{k=1}^{n-2}z_k^{\frac{1}{2}\tr\left(\bs a_k^2-\bs a_{k-1}^2-\bs m_k^2 \right)}, \label{eq:GL}
\end{equation}
where $\mathcal{P}_{\Sigma,+}$ and $\mathcal{P}_{\oplus,+}$ are here defined from the $SL(N)$ Fuchsian linear system on a sphere and the corresponding pants decomposition as in Figure \ref{fig:nptSpherePants} (see \cite{Gavrylenko2016b} for details).
\begin{figure}[h]
\begin{center}
\begin{tikzpicture}[scale=1]

\draw[thick,decoration={markings, mark=at position 0.75 with {\arrow{>}}}, postaction={decorate}] (-5,1.5) circle[x radius=0.5, y radius=0.2];
\draw(-6,0.5) to[out=0,in=270] (-5.5,1.5);
\draw(-4,0.5) to[out=180,in=270] (-4.5,1.5);
\draw(-6,-0.5) to[out=0,in=180] (-4,-0.5);
\draw[thick, red,decoration={markings, mark=at position 0.6 with {\arrow{>}}}, postaction={decorate}]  (-4,0.5) to[out=10,in=-10] (-4,-0.5);
\draw[thick,dashed,red] (-4,0.5) to[out=220,in=-220] (-4,-0.5);
\draw[thick,decoration={markings, mark=at position 0 with {\arrow{>}}}, postaction={decorate}] (-6,0) circle[y radius=0.5, x radius=0.2];

\draw[thick,blue,decoration={markings, mark=at position 0 with {\arrow{<}}}, postaction={decorate}] (-3.3,0) circle[y radius=0.5, x radius =0.2];
\draw (-3.3,-0.5) to (-2.7,-0.5);
\draw (-3.3,0.5) to (-2.7,0.5);
\draw[thick, blue,decoration={markings, mark=at position 0.6 with {\arrow{<}}}, postaction={decorate}]  (-2.7,0.5) to[out=10,in=-10] (-2.7,-0.5);
\draw[dashed, blue] (-2.7,0.5) to[out=220,in=-220] (-2.7,-0.5);

\draw[thick,decoration={markings, mark=at position 0.75 with {\arrow{>}}}, postaction={decorate}] (-1,1.5) circle[x radius=0.5, y radius=0.2];
\draw(-2,0.5) to[out=0,in=270] (-1.5,1.5);
\draw(0,0.5) to[out=180,in=270] (-0.5,1.5);
\draw(-2,-0.5) to[out=0,in=180] (0,-0.5);
\draw[thick,red,decoration={markings, mark=at position 0 with {\arrow{>}}}, postaction={decorate}] (-2,0) circle[y radius=0.5, x radius=0.2];
\draw[thick, red,decoration={markings, mark=at position 0.6 with {\arrow{>}}}, postaction={decorate}]  (0,0.5) to[out=10,in=-10] (0,-0.5);
\draw[thick,dashed, red] (0,0.5) to[out=220,in=-220] (0,-0.5);

\draw[thick,blue,decoration={markings, mark=at position 0 with {\arrow{<}}}, postaction={decorate}] (0.7,0) circle[y radius=0.5, x radius =0.2];
\draw (0.7,-0.5) to (1.3,-0.5);
\draw (0.7,0.5) to (1.3,0.5);
\draw[thick, blue,decoration={markings, mark=at position 0.6 with {\arrow{<}}}, postaction={decorate}]  (1.3,0.5) to[out=10,in=-10] (1.3,-0.5);
\draw[dashed, blue] (1.3,0.5) to[out=220,in=-220] (1.3,-0.5);

\node at (2.3,0) {{\Large\dots}};

\draw[thick,blue,decoration={markings, mark=at position 0 with {\arrow{<}}}, postaction={decorate}] (3,0) circle[y radius=0.5, x radius =0.2];
\draw (3,-0.5) to (3.6,-0.5);
\draw (3,0.5) to (3.6,0.5);
\draw[thick, blue,decoration={markings, mark=at position 0.6 with {\arrow{<}}}, postaction={decorate}]  (3.6,0.5) to[out=10,in=-10] (3.6,-0.5);
\draw[dashed, blue] (3.6,0.5) to[out=220,in=-220] (3.6,-0.5);

\draw[thick,decoration={markings, mark=at position 0.75 with {\arrow{>}}}, postaction={decorate}] (5.3,1.5) circle[x radius=0.5, y radius=0.2];
\draw(4.3,0.5) to[out=0,in=270] (4.8,1.5);
\draw(6.3,0.5) to[out=180,in=270] (5.8,1.5);
\draw(4.3,-0.5) to[out=0,in=180] (6.3,-0.5);
\draw[thick,red,decoration={markings, mark=at position 0 with {\arrow{>}}}, postaction={decorate}] (4.3,0) circle[y radius=0.5, x radius=0.2];
\draw[thick,decoration={markings, mark=at position 0.6 with {\arrow{>}}}, postaction={decorate}]  (6.3,0.5) to[out=10,in=-10] (6.3,-0.5);
\draw[thick,dashed] (6.3,0.5) to[out=220,in=-220] (6.3,-0.5);


\node at (-5,0.2) {{\Large$\mathscr{T}^{[1]}$}};
\node at (-1,0.2) {{\Large$\mathscr{T}^{[2]}$}};
\node at (5.3,0.2) {{\Large$\mathscr{T}^{[n-2]}$}};

\end{tikzpicture}
\end{center}
\caption{Pants decomposition for the $n$-punctured sphere}\label{fig:nptSpherePants}
\end{figure}
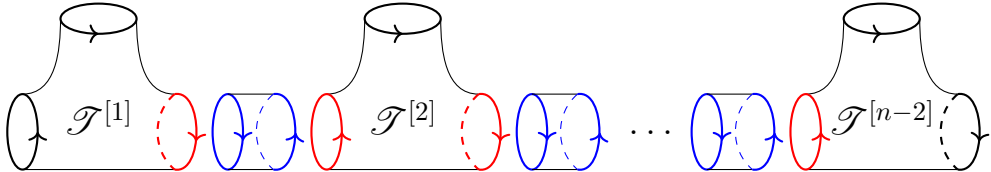

The computation of the monodromy derivative of \eqref{eq:GL} is a simplified version of the proof of Theorem \ref{thm:TauGenFn}: the result is
\begin{align}
\dd\log\T_H^{(g=0)} &=\sum_{k=1}^n\tr\bs m_k\dd_{\mathcal{M}} G_{k}G_{k}^{-1}+\sum_{k=1}^n H_k dz_k-\sum_{k=1}^n\Theta_{k}^{(3pt)}\\
& = \sum_{k=1}^n\tr\bs m_k\dd G_{k}G_{k}^{-1}-\sum_{k=1}^n H_k dz_k-\sum_{k=1}^n\Theta_{k}^{(3pt)},\label{eq:TauHGeng0}
\end{align}
where $H_k$ are the Schlesinger Hamiltonians, and
\begin{equation}
\Theta_k^{3pt}:=-\left(\tr\bs a_kd_{\mathcal{M}} G_{-}^{[k]}\left(G_{-}^{[k]}\right)^{-1} - \tr\bs m_kd_{\mathcal{M}} G_0^{[k]}\left(G_0^{[k]}\right)^{-1}-\tr\bs a_{k+1}d_{\mathcal{M}} G_{+}^{[k]}\left(G_{+}^{[k]}\right)^{-1}\right)
\end{equation}
are the one-forms coming from the pants decomposition, like in the case of the torus. In particular, this shows that the Fredholm determinant tau function of \cite{Gavrylenko2016b,Cafasso2017} is indeed the fully normalized tau function of \cite{Its2016}, and thus coincides with the generating function of the monodromy map for the $n$-punctured sphere with Fuchsian singularities, as in \cite{Bertola2019}.
\end{appendix}

\bibliographystyle{alph}
\bibliography{Biblio}

\end{document}